\newcommand{\qed}{\rule{7pt}{7pt}}
\newenvironment{proof}{\noindent {\it Proof\/}:}{$\qed$ \medskip}
 \newcommand{\set}[1]{\{ #1 \}}
\newtheorem{theorem}{Theorem}[section]
\newtheorem{lemma}[theorem]{Lemma}
\newtheorem{corollary}[theorem]{Corollary}
\newtheorem{remark}[theorem]{Remark}
\def\({\left(}
\def\){\right)}
\newcommand{\mqed}{}
\newlength{\fixSUBT}
\newcommand{\SUBT}{\ensuremath{SU\hspace*{-\fixSUBT}BT}}
\begin{document}

\title{On the Integrality Gap of the Subtour LP for the 1,2-TSP\thanks{A preliminary version of this paper appeared in LATIN 2012: Theoretical Informatics \cite{QianSWvZ11}. Some results in the current paper are stronger, using the same techniques as in~\cite{QianSWvZ11}.
The first author was supported in part by NSF grant CCF-1115256.  This work was carried out in part while the third author was on sabbatical at TU Berlin, and the fourth author was at Max-Planck-Institut f\"ur Informatik in Saarbr\"ucken, Germany.
The third author was supported in part by the Berlin Mathematical School, the Alexander von Humboldt Foundation, and NSF grant CCF-1115256.}
}


\author{Jiawei Qian\thanks{Address: Bank of America Merrill Lynch,
Chicago, IL, 60661, USA.   Email: {\tt jq35@cornell.edu}.  Supported in part by NSF grant CCF-0830519.}
\and Frans Schalekamp\thanks{Address: Department of Mathematics, College of William and Mary, Williamsburg, VA 23185, USA. Email:{\tt frans@wm.edu}}
\and David P.\ Williamson\thanks{Address: School
of Operations Research and Information Engineering, Cornell
University, Ithaca, NY 14853, USA.  Email: {\tt dpw@cs.cornell.edu}.  This work was carried out while the author was on sabbatical at TU Berlin.
Supported in part by the Berlin Mathematical School, the Alexander von Humboldt Foundation, and NSF grant CCF-1115256.}
\and Anke van Zuylen
\thanks{Address: Department of Mathematics, College of William and Mary, Williamsburg, VA 23185, USA. Email:{\tt anke@wm.edu}}}

\date{}
\maketitle

\newtheorem{conj}{Conjecture}

\begin{abstract}
In this paper, we study the integrality gap of the subtour LP relaxation for the traveling salesman problem in the special case when all edge costs are either 1 or 2. For the general case of symmetric costs that obey triangle inequality, a famous conjecture is that the integrality gap is 4/3. Little progress towards resolving this conjecture has been made in thirty years.
We conjecture that when all edge costs $c_{ij}\in \{1,2\}$, the integrality gap is $10/9$. We show that this conjecture is true when the optimal subtour LP solution has a certain structure. Under a weaker assumption, which is an analog of a recent conjecture by Schalekamp, Williamson and van Zuylen, we show that the integrality gap is at most $7/6$. When we do not make any assumptions on the structure of the optimal subtour LP solution, we can show that integrality gap is at most $5/4$; this is the first bound on the integrality gap of the subtour LP strictly less than $4/3$ known for an interesting special case of the TSP.  We show computationally that the integrality gap is at most $10/9$ for all instances with at most 12 cities.
\end{abstract}

\section{Introduction}
The Traveling Salesman Problem (TSP) is one of the most well studied problems in combinatorial optimization. Given a set of cities $\{1, 2, \ldots, n\}$, and distances $c( i, j )$ for traveling from city $i$ to $j$, the goal is to find a tour of minimum length that visits each city exactly once.
An important special case of the TSP is the case when the distance forms a metric, i.e., $c(i,j)\le c(i,k) + c(k,j)$ for all $i,j,k$, and all distances are symmetric, i.e., $c(i,j)=c(j,i)$ for all $i,j$. The symmetric TSP is known to be NP-hard, even if $c(i,j)\in \{1,2\}$ for all $i,j$~\cite{PapadimitriouY93}; note that such instances trivially obey the triangle inequality.  Such instances are also known to be APX-hard; that is, there is no $\alpha$-approximation algorithm for the problem for some $\alpha > 1$ unless $P = NP$.

The metric TSP can be approximated to within a factor of $\tfrac 32$ using an algorithm by Christofides~\cite{Christofides76} from 1976.  The algorithm combines a minimum spanning tree with a matching on the odd-degree nodes to get an Eulerian graph that can be shortcut to a tour; the analysis shows that the minimum spanning tree and the matching cost no more than the optimal tour and half the optimal tour respectively.
Better results are known for several special cases, but, surprisingly, no progress has been made on approximating the general symmetric TSP in more than thirty years. A natural direction for trying to obtain better approximation algorithms is to use linear programming. The following linear programming relaxation of the traveling salesman problem was used by Dantzig, Fulkerson, and Johnson~\cite{DantzigFJ54} in 1954.  For simplicity of notation, we let $G=(V,E)$ be a complete undirected graph on $n$ nodes.  In the LP relaxation, we have a variable $x(e)$ for all $e = (i,j)$ that denotes whether we travel directly between cities $i$ and $j$ on our tour.  Let $c(e) = c(i,j)$, and let $\delta(S)$ denote the set of all edges with exactly one endpoint in $S \subseteq V$.
Then the relaxation is \lps & &
& \mbox{Min} & \sum_{e \in E} c(e) x(e) \\
(\SUBT) & \mbox{subject to:} & & & \sum_{e \in \delta(i)} x(e) = 2, & \forall i \in V, \numb{degreecons} \\
& & & & \sum_{e \in \delta(S)}x(e) \geq 2, & \forall S\subset V,\, 3 \leq |S| \leq |V|-3, \numb{subtourcons}\\
& & & & 0 \leq x(e) \leq 1, & \forall e \in E. \numb{boundscons} \elps The first
set of constraints (\ref{degreecons}) are called the {\em degree constraints}.  The second set of constraints (\ref{subtourcons}) are sometimes called {\em subtour elimination constraints} or sometimes just {\em subtour constraints}, since they prevent solutions in which there is a subtour of just the nodes in $S$. As a result, the linear program is sometimes called the {\em subtour LP}.
It has been shown by Wolsey~\cite{Wolsey80} (and later Shmoys and Williamson~\cite{ShmoysW90}) that Christofides' algorithm finds a tour of length at most~$\tfrac 32$ times the optimal value of the subtour LP; these proofs show that the minimum spanning tree and the matching on odd-degree nodes can be bounded above by the value of the subtour LP, and half the value of the subtour LP, respectively.  This implies that the integrality gap, the worst case ratio of the length of an optimal tour divided by the optimal value of the LP, is at most~$\tfrac 32$. However, no examples are known that show that the integrality gap can be as large as~$\tfrac 32$; in fact, no examples are known for which the integrality gap is greater than $\tfrac 43$. A well known conjecture states that the integrality gap is indeed $\tfrac 43$; see (for example) Goemans~\cite{Goemans95}.

Recently, progress has been made in several directions, both in improving the best approximation guarantee and in determining the exact integrality gap of the subtour LP for certain special cases of the symmetric TSP.
In the {\em graph}-TSP, the costs $c(i,j)$ are equal to the shortest path distance in an underlying unweighted graph.  If the graph is cubic and 3-connected, Gamarnik, Lewenstein and Sviridenko~\cite{GamarnikLS05} show an approximation algorithm with guarantee slightly better than $\tfrac 32$.
Oveis Gharan, Saberi, and Singh~\cite{OveisGharanSS10} show that the graph-TSP can be approximated to within $\tfrac 32-\epsilon$ for a small constant $\epsilon>0$ for all graphs.
Boyd, Sitters, van der Ster and Stougie~\cite{BoydSSS11}, and Aggarwal, Garg and Gupta~\cite{AggarwalGG11} independently give a $\tfrac 43$-approximation algorithm if the underlying graph is cubic.
M\"omke and Svensson~\cite{MomkeS11} improve these results by giving a 1.461-approximation for the graph-TSP and an $\tfrac 43$-approximation algorithm if the underlying graph is subcubic. Mucha~\cite{Mucha11} improves the analysis of the M\"omke-Svensson algorithm to a  $\tfrac{13}{9}$-approximation algorithm, and Seb\H o and Vygen~\cite{SeboV12} combine the ideas of M\"omke and Svensson~\cite{MomkeS11} with an algorithm based on a carefully chosen ear decomposition of the graph to obtain a $\tfrac75$-approximation algorithm.  All of these $\alpha$-approximation algorithms imply a corresponding upper bound of $\alpha$ on the integrality gap for the corresponding graph-TSP instances.

In Schalekamp, Williamson and van Zuylen~\cite{SchalekampWvZ11}, three of the authors of this paper resolve a related conjecture. A 2-matching of a graph is a set of edges such that no edge appears twice and each node has degree two, i.e., it is an integer solution to the LP $(\SUBT)$  with only constraints (\ref{degreecons}) and (\ref{boundscons}). Note that a minimum-cost 2-matching thus provides a lower bound on the length of the optimal TSP tour. A minimum-cost 2-matching can be found in polynomial time using a reduction to a certain minimum-cost matching problem. Boyd and Carr~\cite{BoydC11} conjecture that the worst case ratio of the cost of a minimum-cost 2-matching and the optimal value of the subtour LP is at most $\tfrac {10}9$. This conjecture was proved to be true by Schalekamp et al.\ and examples are known that show this result is tight.

Unlike the techniques used to obtain better results for the graph-TSP, the techniques of Schalekamp et al.\ work on general weighted instances that are symmetric and obey the triangle inequality. However, their results only apply to 2-matchings and it is not clear how to enforce global connectivity on the solution obtained by their method. A potential direction for progress on resolving the integrality gap for the subtour LP is a conjecture by Schalekamp et al.\ that the worst-case integrality gap is attained for instances for which the optimal subtour LP solution is a basic solution to the linear program obtained by dropping the subtour elimination constraints.

In this paper, we turn our attention to the 1,2-TSP, where $c(i,j)\in\{1,2\}$ for all $i,j$. Note that bounding the cost of enforcing connectivity is relatively easy in this class of problems, since we may connect any two components for an increase in cost of at most 2. Papadimitriou and Yannakakis~\cite{PapadimitriouY93} show how to approximate 1,2-TSP within a factor of $\tfrac{11}9$ by computing a minimum-cost 2-matching and merging its cycles into a tour. In addition, they show a ratio of $\tfrac 76$ if they start with a minimum-cost 2-matching that has no cycles of length 3. Bl\"aser and Ram~\cite{BlaeserR05} improve this ratio and the best known approximation factor of $\tfrac 87$ is given by Berman and Karpinski~\cite{BermanK06}.

We do not know a tight bound on the integrality gap of the subtour LP even in the case of the 1,2-TSP.  As an upper bound, we appear to know only that the gap is at most $\tfrac 32$ via Wolsey's result.   There is an easy 9 city example showing that the gap must be at least $\tfrac{10}9$; see Figure~\ref{fig:12tsp}.  This example has been extended to a class of instances on $9k$ nodes for any positive integer $k$ by Williamson~\cite{Williamson90}.  The contribution of this paper is to  begin a study of the integrality gap of the 1,2-TSP, and to improve our state of knowledge for the subtour LP in this case. We prove an upper bound on the integrality gap for the subtour LP of $\tfrac 54$, which is the first bound on the integrality gap with value less than $\tfrac 43$ for a natural class of TSP instances. Under an analog of a conjecture of Schalekamp et al.~\cite{SchalekampWvZ11}, we show that the integrality gap is at most $\tfrac 76$, and with an additional assumption on the structure of the solution, we can improve this bound to $\tfrac{10}9$. We describe these results in more detail below.

\begin{figure}[t]
\begin{center}
\includegraphics[height=.75in]{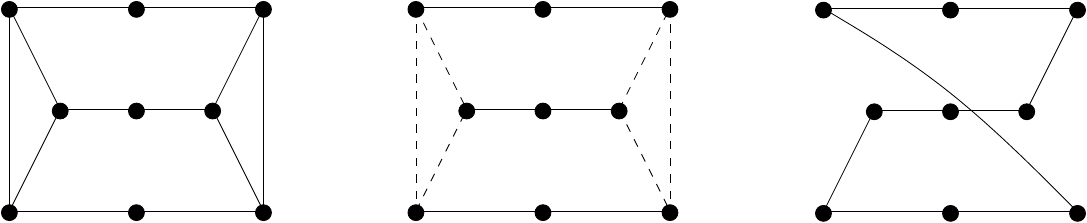}
\end{center}
\caption{Illustration of the worst example known for the integrality gap for
the 1,2-TSP.  The figure on the
left shows all edges of cost 1.  The figure in the center gives the subtour LP
solution, in which the dotted edges have value $\tfrac 12$, and the solid
edges have value 1; this is also an optimal fractional 2-matching.  The figure on the left gives the optimal
tour and the optimal 2-matching.} \label{fig:12tsp}
\end{figure}

All the known approximation algorithms since the initial work of Papadimitriou and Yannakakis~\cite{PapadimitriouY93} on the problem start by computing a minimum-cost 2-matching.
However, the example of Figure~\ref{fig:12tsp} shows that an optimal 2-matching can be as much as $\tfrac{10}9$ times the value of the subtour LP for the 1,2-TSP, so we cannot directly replace the bound on the optimal solution in these approximation algorithms with the subtour LP in the same way that Wolsey did with Christofides' algorithm in the general case. Using the result of Schalekamp, Williamson, and van Zuylen~\cite{SchalekampWvZ11} and a new lemma that relates part of the analysis of Papadimitriou and Yannakakis~\cite{PapadimitriouY93} to the subtour LP bound, we obtain a preliminary upper bound on the integrality gap of the subtour LP for the 1,2-TSP of $\tfrac{7}{9}\cdot\tfrac{10}{9}+\tfrac{4}{9} = \tfrac{106}{81} \approx 1.3086.$ 

To improve this upper bound to $\tfrac{5}{4}$, we first show stronger results in some cases.
A fractional 2-matching is a basic optimal solution to the LP $(\SUBT)$ with only constraints (\ref{degreecons}) and (\ref{boundscons}).
Schalekamp et al.~\cite{SchalekampWvZ11} have conjectured that the worst-case integrality gap for the subtour LP is obtained when the optimal solution to the subtour LP is an extreme point of the fractional 2-matching polytope.
We show that if this is the case for 1,2-TSP
then we can find a tour of cost at most $\tfrac 76$ the cost of the fractional 2-matching, implying that the integrality gap is at most $\tfrac 76$ in these cases.
Next, we show that if this optimal solution to the fractional 2-matching problem has a certain structure, then we can find a tour of cost at most $\tfrac{10}9$ times the cost of the fractional 2-matching, implying an upper bound on the integrality gap of $\tfrac{10}9$ for these cases.  Figure~\ref{fig:12tsp} shows that this result is tight.  

We then use the previous arguments to show that one can construct a tour of cost at most $\tfrac{5}{4}$ times the subtour LP value. To do this, we prove that we can assume without loss of generality that the optimal value of the subtour LP is less than $n+1$, where $n$ denotes the number of nodes. Combined with a more careful analysis based on the results obtained before, we obtain our main result.
The results above all lead to polynomial-time algorithms, though we do not state the exact running times.

Finally, we perform computational experiments to show that the integrality gap is at most $\tfrac{10}9$ for $n \leq 12$.  We conjecture that the integrality gap is in fact exactly $\tfrac{10}9$.

We note that the upper bound on the integrality gap for general 1,2-TSP instances presented in this paper is stronger than the bound that appeared in a preliminary version of this paper~\cite{QianSWvZ11} of $\tfrac{19}{15}$. In the time between publication of the preliminary version and the current revision, Mnich and M\"omke~\cite{MnichM13} obtained an upper bound of $\tfrac{5}{4}$ on the integrality gap for 1,2-TSP instances that have the additional property of being ``fractionally Hamiltonian'', which means that the optimal objective value of the subtour LP is equal to the number of nodes in the instance. In this version, using the same techniques as in the preliminary version, we show an unconditional upper bound on the integrality gap of $\tfrac{5}{4}$, and a bound of $\tfrac{26}{21}$ for fractionally Hamiltonian instances.

The remainder of this paper is structured as follows.  Section~\ref{sec:12gap} contains preliminaries and a first general bound on the integrality gap for the 1,2-TSP.  We show how to obtain stronger bounds if the optimal subtour LP solution is a fractional 2-matching in Section~\ref{sec:12-2m}. In Section~\ref{sec:better}, we combine the arguments from the previous sections and show that the integrality gap without any assumptions on the structure of the subtour LP solution is at most $\tfrac{5}{4}$. We describe our computational experiments in Section~\ref{sec:comp}.  Finally, we close with a conjecture on the integrality gap of the subtour LP for the 1,2-TSP in Section~\ref{sec:conc}.
\iftoggle{abs}{Some proofs are omitted due to space reasons and can be found in the full version. of the paper.}{}

\section{Preliminaries and a first bound on the integrality gap}
\label{sec:12gap}
We will work extensively with 2-matchings and fractional 2-matchings; that is, extreme points $x$ of the LP $(\SUBT)$ with only constraints (\ref{degreecons}) and (\ref{boundscons}), where in the first case the solutions are required to be integer. For convenience we will abbreviate ``fractional 2-matching'' by F2M and ``2-matching'' by 2M.  The basic solutions of the F2M polytope have the following well-known structure (attributed to Balinski~\cite{Balinski65}).  Each connected component of the support graph (that is, the edges $e$ for which $x(e) > 0$) is either a cycle on at least three nodes with $x(e)=1$ for all edges $e$ in the cycle, or consists of odd-sized cycles with $x(e) = \tfrac 12$ for all edges $e$ in the cycle connected by paths of edges $e$ with $x(e) = 1$ for each edge $e$ in the path (the center figure in Figure~\ref{fig:12tsp} is an example).  We call the former components {\em integer components} and the latter {\em fractional components}.  In a fractional component, we call a path of edges $e$ with $x(e)=1$ a {\em 1-path}.  The edges $e$ with $x(e)=\tfrac 12$ in cycles are called {\em cycle edges}.  An F2M with a single component is called {\em connected}, and we call a component {\em 2-connected} if the sum of the $x$-values on the edges crossing any cut is at least $2$. We let $n$ denote the number of nodes in an instance.

As mentioned in the introduction, Schalekamp, Williamson, and van Zuylen~\cite{SchalekampWvZ11} have  shown the following.

\begin{theorem}[Schalekamp et al.~\cite{SchalekampWvZ11}]
\label{thm:109}
If edge costs obey the triangle inequality, then the cost of an optimal 2-matching is at most $\tfrac{10}9$ times the value of the subtour LP.
\end{theorem}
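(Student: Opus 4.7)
The plan is to exhibit, for any feasible subtour LP solution $x$ of cost $L$, a 2-matching $M$ with $c(M) \leq \frac{10}{9} L$. Since the fractional 2-matching polytope is a relaxation of the subtour LP polytope (it has strictly fewer constraints and the same objective), I would first replace $x$ by an extreme point $\hat{x}$ of the F2M polytope of cost at most $L$. By the structural description given in the preliminaries, the support of $\hat{x}$ decomposes into integer components, each a simple cycle carrying $x(e)=1$ on every edge, and fractional components, each consisting of odd half-integral cycles joined by 1-paths. The integer components are already valid pieces of a 2-matching, so the task reduces to showing, for each fractional component separately, that there is a 2-matching on its node set whose cost is at most $\frac{10}{9}$ times the fractional cost of the component; the global bound then follows by taking the disjoint union.

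For each fractional component, I would resolve every odd half-integral cycle into integer structure as follows. Each cycle node has exactly one 1-path incident, since the two incident cycle edges already contribute $\frac{1}{2}+\frac{1}{2}$ to its degree. The natural operation is to pair consecutive 1-paths around each cycle in twos, merging each pair into a single longer integer path by adding a shortcut edge between their far endpoints; the triangle inequality bounds the cost of each such shortcut by the length of the corresponding arc of the cycle. Because there is an odd number of 1-paths on each odd cycle, exactly one 1-path is left unpaired and must be combined with adjacent structure (either merged with a neighboring cycle's unpaired 1-path, or used to close the component). The net effect is to eliminate all fractional cycle edges in favor of shortcut edges, obtaining an integer 2-matching on the same node set.

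The main obstacle I anticipate is obtaining the sharp constant $\frac{10}{9}$ rather than the $\frac{4}{3}$ that a naive worst-case shortcut argument yields for an isolated half-triangle (cost $\frac{3}{2}$ replaced by a single shortcut edge of cost at most $2$). To bridge this gap I would average over the $2k+1$ rotational choices of the unpaired 1-path for each cycle of length $2k+1$, so that each candidate shortcut edge is used only in a $\frac{2}{2k+1}$ fraction of the choices. Coupling this with the fact that the 1-paths contribute their full length to the fractional cost of the component (while cycle edges contribute only half of theirs), and with the minimum cycle length $2k+1 \geq 3$, should yield the target factor $\frac{10}{9}$, with equality attained at the example in Figure~\ref{fig:12tsp}. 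The main technical content is to verify that this averaging argument, together with triangle inequality, simultaneously bounds the cost of the chosen pairing across all cycles of the component, and that the chosen integer structure is indeed a valid 2-matching (i.e.\ that the shortcut edges and the preserved 1-paths jointly give degree two at every node).
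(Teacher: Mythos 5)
The paper does not prove this statement at all: it is imported verbatim from Schalekamp, Williamson, and van Zuylen~\cite{SchalekampWvZ11}, and the proof there is very different from your plan. Roughly, that proof works with the subtour LP solution itself, exploits the fact that the cut constraints (\ref{subtourcons}) make its support (fractionally) 2-edge-connected, and extracts a cheap perfect matching in an auxiliary bridgeless cubic graph (via Edmonds' description of the perfect matching polytope) to convert the LP solution into a graphical 2-matching, which is then shortcut. Your very first step discards exactly the ingredient that proof leans on: by passing to an extreme point of the F2M polytope and treating its components independently, you commit yourself to the strictly stronger claim that a minimum-cost 2-matching costs at most $\frac{10}{9}$ times the optimal \emph{fractional 2-matching}, whose value can be strictly smaller than the subtour LP value (a fractional component may contain bridge 1-paths and so violate the cut constraints). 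You give no argument for this stronger per-component claim, and it cannot be waved through as ``the F2M polytope is a relaxation'': the relaxation inequality goes the wrong way for you — it makes the denominator smaller — so you have genuinely strengthened the theorem, and the known proof's essential use of 2-edge-connectivity is a warning that this strengthening is where all the difficulty (or a counterexample) lives.

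Even granting the reduction, the construction sketched does not yield $\frac{10}{9}$. First, the pairing step as written does not parse: merging two consecutive 1-paths should be done by adding the cycle edge between their \emph{near} endpoints (the two adjacent cycle nodes); an edge between their \emph{far} endpoints is not bounded by a cycle arc but by the arc plus both 1-paths. Second, and more importantly, the rotation-averaging only controls the cost of the selected cycle edges, and these are already cheap: over the $2k+1$ rotations each cycle edge is chosen with frequency $\frac{k}{2k+1}\le\frac12$, so this part costs no more than what the fractional solution pays for the cycle. The entire excess therefore sits in closing up the leftover structure: the unpaired endpoints (one per odd cycle, and there are an even number of odd cycles, so the leftovers form paths spanning possibly many cycles and 1-paths) must be closed into cycles of length at least three, and in a general metric the only bound the triangle inequality gives for such a closing edge is the cost of the entire leftover path — a potential doubling of an uncontrolled portion of the 1-path cost, which the averaging over rotations does not address (and coordinating the rotations of different cycles so that all leftover paths are cheap is a global problem you have not engaged). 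There are also degenerate cases, e.g.\ a single-edge 1-path both of whose endpoints end up unpaired leaves a two-node component that cannot be closed into a legal cycle. In the 1,2-cost special case every closing edge costs at most 2, which is why the accounting comes out right on Figure~\ref{fig:12tsp}, but the theorem you are proving is for arbitrary metrics, so the key estimate — bounding the closing cost by roughly one ninth of the component's fractional cost — is exactly what is missing.
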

It is not hard to show that this immediately implies an upper bound of $\tfrac 43 \times \tfrac{10}9$ on the integrality gap of the subtour LP for the 1,2-TSP: we can just compute a minimum cost 2-matching at a cost of $\tfrac{10}9$ the value of the subtour LP, remove the most expensive edge from each cycle, which gives a collection of node-disjoint paths, and add edges of cost 2 to combine these paths into a tour.  Each cycle has at least 3 edges; at worst, we remove an edge of cost 1 from each cycle and then need an edge of cost 2 to patch the paths into a tour.  Thus the overall cost increases by a factor of $\tfrac43$, giving a tour of cost at most $\tfrac43 \times \tfrac{10}9$ times the value of the subtour LP.

The algorithm of Papadimitriou and Yannakakis~\cite{PapadimitriouY93} improves on this idea, by cleverly merging the cycles of the optimal 2M solution.
We summarize the properties of their algorithm that we will use. 
First, observe that we can assume without loss of generality that the optimal 2M solution consists of a number of cycles with only edges of cost 1 (``pure'' cycles) and at most one cycle which has one or more edges of cost 2 (the ``non-pure'' cycle), by deleting the edges of cost 2 and combining the resulting disjoint paths into a single cycle.  Moreover, if $i$ is a node in the non-pure cycle which is incident on an edge of cost 2 in the cycle, then there can be no edge of cost 1 connecting $i$ to a node in a pure cycle (since otherwise, we can merge the non-pure cycle with a pure cycle without increasing the cost).

The Papadimitriou-Yannakakis algorithm solves the following bipartite matching problem: On one side we have a node for every pure cycle, and on the other side, we have a node for every node in the instance. There is an edge from pure cycle $C$ to node $i$, if $i\not\in C$ and there is an edge of cost 1 from $i$ to some node in $C$. Let $r$ be the number of pure cycles that are unmatched in a maximum cardinality bipartite matching. Papadimitriou and Yannakakis show how to ``patch together'' the matched cycles. We refer the reader to their original paper~\cite{PapadimitriouY93} for more details.
The resulting cycles are then combined into a tour of cost at most 
\begin{equation}\tfrac {7}9 OPT(2M) + \tfrac 49 n + \tfrac 13 r,\label{eq:PY}\end{equation}
where $OPT(2M)$ is the cost of an optimal 2M solution.\footnote{In~\cite{PapadimitriouY93}, $OPT(2M)$ is expressed as $n+k$, where $k$ is the number of edges of cost 2 in the optimal 2M solution. The number of unmatched pure cycles is denoted by $r_2$. The bound given by~\cite{PapadimitriouY93} is $n+k +\tfrac29(n-n_2-k) + r_2$, where $n_2$ is a quantity that is lower bounded by $3r_2$. Therefore, the bound in~\cite{PapadimitriouY93} can be upper bounded by $\tfrac79(n+k)+\tfrac 49n + \tfrac13 r_2$.}

We now show how to convert this bound into a bound in terms of the optimal value to \SUBT.
\begin{lemma}\label{lem:PYsubt}
Let $r$ be the number of pure cycles that are unmatched in a maximum cardinality bipartite matching instance defined by Papadimitriou and Yannakakis.
Then 
\[OPT(\SUBT) \ge n + r.\]
\end{lemma}

\begin{proof}
We note that for a bipartite matching instance, the size of the minimum cardinality vertex cover is equal to the size of the maximum matching. We use this fact to construct a feasible dual solution to the subtour LP that has value $n+r$.
Let ${\cal C}_M, V_M$ be the pure cycles and nodes (in the original graph), for which the corresponding nodes in the bipartite matching instance are in the minimum cardinality vertex cover.
The dual of the subtour LP $(\SUBT)$ is \lps & &
& \mbox{Max} & 2\sum_{S \subset V} y(S) + 2\sum_{i \in V} y(i) - \sum_{e \in E} z(e)\\
(D) &\mbox{subject to:} 
& & & \sum_{S \subset V: e \in \delta(S)} y(S) + y(i) + y(j) - z(e) \leq c(e), & \forall e = (i,j), \\
& & & & y(S) \geq 0, &\hspace*{-2.25cm} \forall S\subset V,\, 3 \leq |S| \leq n-3,\\
& & & & z(e) \geq 0, &\hspace*{-2.25cm} \forall e \in E. \elps
We set $z(e) = 0$ for each $e \in E$, and we set $y(i) =\tfrac 12$ for each $i\in V\backslash V_M$. For a pure cycle on a set of nodes $C$, we set $y(C)=\tfrac12$, if the cycle is not in ${\cal C}_M$. The dual objective for this solution is exactly $n+r$: its value is $n$ plus the number of pure cycles minus the size of the vertex cover, or $n$ plus the number of pure cycles minus the size of the matching, since the vertex cover has the same size as the matching.  Thus it is the same as $n$ plus the number of pure cycles not in the matching, or $n+r$.

It remains to show that the dual constructed is feasible.  Define the {\em load} on an edge $e = (i,j)$ of solution $(y,z)$ to be $\sum_{S \subset V: e \in \delta(S)} y(S) + y(i) + y(j) - z(e)$.  For any edge $e=(i,j)$  of cost 1 inside a cycle of the 2M, the load on the edge is at most 1, since the only potentially non-zero dual variables loading the edge are the dual variables $y(i)$ and $y(j)$. For an edge $(i,j)$ where $i\in C$ and $j\in C'\neq C$, the load is $y(i)+y(j)+y(C)+y(C')\le 2$.
Suppose $(i,j)$ has cost 1, and the cycles $C$ and $C'$ are both pure cycles. Then the edge occurs twice in the bipartite matching instance (namely, once going from $i$ to $C$ and once going from $j$ to $C'$) and hence the dual of at least two of the four objects $i,j, C$ and $C'$ has been reduced to 0. The total load on edge $(i,j)$ is thus at most 1.
Now, suppose $C'$ is the non-pure cycle, then $y_{C'}=0$, since we only increased the dual variables for the pure cycles. Moreover, at least one endpoint of the $(j,C)$ edge in the bipartite matching instance must be in the vertex cover, so the load on edge $(i,j)$ is again at most 1.
\end{proof}

Note that, combined with (\ref{eq:PY}) and Theorem \ref{thm:109}, Lemma~\ref{lem:PYsubt} implies that the cost of the tour is at most $ \tfrac{7}{9} \cdot \tfrac{10}{9} OPT(\SUBT) + \tfrac{4}{9} OPT(\SUBT)=\tfrac {106}{81}OPT(\SUBT).$
This bound obtained on the integrality gap seems rather weak, as the best known lower bound on the integrality gap is only $\tfrac{10}9$.
Schalekamp, Williamson, and van Zuylen~\cite{SchalekampWvZ11} have conjectured that the integrality gap (or worst-case ratio) of the subtour LP occurs when the solution to the subtour LP is a fractional 2-matching.

\begin{conj}[Schalekamp et al.~\cite{SchalekampWvZ11}] \label{conj:tsp}
Let $\alpha_n$ be the integrality gap of the subtour LP on $n$ vertices.  Then there exists an instance which has an optimal subtour LP solution that is an F2M and for which the optimal tour has cost at least $\alpha_n$ times the subtour LP cost.
\end{conj}

In the next section, we show that we can obtain better bounds on the integrality gap of the subtour LP in the case that the optimal solution is a fractional 2-matching.
In Section~\ref{sec:better}, we then show how to combine Lemma~\ref{lem:PYsubt} with the bounds in the next section to obtain an upper bound of $\tfrac{5}{4}$ on the integrality gap.

\section{Better bounds if the optimal solution is an F2M}
\label{sec:12-2m}
If the optimal solution to the subtour LP is a fractional 2-matching, then a natural approach to obtaining a good tour is to start with the edges with $x$-value 1, and add as many edges of cost 1 and $x$-value $\tfrac 12$ as possible, without creating a cycle on a subset of the nodes. In other words, we will propose an algorithm that creates an acyclic spanning subgraph $(V,T)$ where all nodes have degree one or two.
We will refer to an acyclic spanning subgraph in which all nodes have degree one or two as a partial tour.
A partial tour can be extended to a tour by adding $d/2$ edges of cost $2$, where $d$ is the number of degree $1$ nodes. The cost of the tour is $c( T ) + d$, where $c( T ) = \sum_{e \in T} c(e)$.

\iftoggle{abs}{}{We will use the following lemma.}
\begin{lemma}\label{lem:third}
Let $G=(V,T)$ be a partial tour.
Let $A$ be a set of edges not in $T$ that form an odd cycle or a path on $V'\subset V$, where the nodes in $V'$ have degree one in $T$.
We can find $A'\subset A$ such that $(V,T\cup A')$ is a partial tour, and
\begin{itemize}
\item
 $|A'|\ge \tfrac 13 |A|$ if $A$ is a cycle,
 \item
 $|A'|\ge \tfrac 13 (|A|-1)$ if $A$ is a path,
 \end{itemize}
\end{lemma}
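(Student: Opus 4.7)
The plan is to prove both cases of the lemma by induction on $|A|$. A structural fact used throughout is that each path-component of $T$ has at most two $V'$-nodes (its two degree-$1$ endpoints); in particular, if an edge $e=(u,v)$ of $A$ is a ``loop'' in the sense that $u$ and $v$ both lie in the same component $P$ of $T$, then $u$ and $v$ are the only $V'$-nodes of $P$.

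For the path case with $A = e_1, \dots, e_k$ and $e_i = (v_i, v_{i+1})$, I would examine the first edge $e_1$. If $e_1$ is non-loop, add it to $A'$: this merges two path-components and makes $v_1,v_2$ interior. The sub-path $e_3,\dots,e_k$ still satisfies the lemma's hypotheses on the updated partial tour, so the inductive hypothesis gives $|A'| \geq 1 + (k-3)/3 = k/3 \geq (k-1)/3$. Otherwise $e_1$ is a loop, so $v_1,v_2$ are the only $V'$-nodes of their common component $P$ and $v_3 \notin P$; hence $e_2$ is non-loop. Add $e_2$ and recurse on $e_4,\dots,e_k$, giving $|A'| \geq 1 + (k-4)/3 = (k-1)/3$.

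For the cycle case I would extend the same idea using a further structural observation: no two cyclically consecutive edges of $A$ can both be loops, since otherwise three distinct $V'$-nodes would have to share one path-component. Hence if $A$ contains any loop $e_\ell$ then both $e_{\ell-1}$ and $e_{\ell+1}$ are non-loop, and if $A$ has no loops then every edge is non-loop. The plan is to cut the cycle at an edge so that the resulting path of length $|A|-1$ has both its first and last edges non-loop (cutting at a loop if one exists, otherwise at an arbitrary edge), and then feed this path into a strengthened form of the path argument.

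The main obstacle is that the naive path lemma only yields $(|A|-2)/3$ on the cut path, whereas the cycle case demands $|A|/3$. Closing this gap requires exploiting the ``both ends non-loop'' property of the cut path. I would strengthen the inductive statement for paths to: a path of length $\ell \geq 1$ whose first and last edges are both non-loop admits $|A'| \geq (\ell+1)/3$; for $\ell = |A|-1$ this is exactly $|A|/3$. The subtle point is that after we add the first edge during the induction the merge of two path-components can turn the last edge of the sub-path into a loop, threatening the invariant; I would handle this either by choosing which end of the path to process based on the current state, or by a short case analysis showing that whenever the last-edge-non-loop invariant is destroyed, the merge simultaneously creates enough additional non-loop structure in the sub-path to recover the lost fraction of an edge during the recursion.
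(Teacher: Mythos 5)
Your path case is sound, and it is essentially the paper's own argument (add the first or the second edge, skip at most two edges between successive additions) recast as an induction, so there is no problem there. The genuine gap is in the cycle case, and it is not just a missing detail: the strengthened path statement you propose to prove by induction --- ``if the first and last edges of a path of length $\ell$ are non-loops then $|A'|\ge(\ell+1)/3$'' --- is false. Let $T$ consist of two paths, one with endpoints $a,b$ (say $a$--$x$--$b$) and one with endpoints $c,d$ (say $c$--$y$--$d$), and let the $A$-path be $bc,\,cd,\,da$. Its first and last edges are non-loops, so your claim would guarantee two additions, but at most one of these three edges can ever be added: $cd$ is a loop, and adding $bc$ (or $da$) merges the two components into a single path with endpoints $a,d$ (respectively $b,c$), after which the remaining non-loop edge closes a cycle and $cd$ meets a degree-two node. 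Neither of your suggested repairs helps: processing from the other end is symmetric in this configuration, and the merge creates no compensating non-loop structure.

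The missing ingredient is the oddness of the cycle, which your argument never invokes --- and it must be invoked, because for an even cycle the conclusion $|A'|\ge|A|/3$ itself fails: with the same $T$ as above and $A$ the $4$-cycle $ab,bc,cd,da$, only one edge of $A$ can be added, while $|A|/3=4/3$. The paper uses oddness exactly at this point: since the cycle has an odd number of nodes and the $T$-paths pair up their endpoints, some cycle node $u_j$ lies on a $T$-path whose other endpoint is outside the cycle; starting the greedy at $\{u_{j-1},u_j\}$ or at $\{u_j,u_{j+1}\}$ then guarantees that two of the first three edges are added, which upgrades the naive $\lfloor|A|/3\rfloor$ bound to $|A|/3$. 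To salvage your inductive scheme you would need an invariant that transports this parity/exit-path information into the subproblems; ``first and last edges are non-loops'' alone is provably insufficient.
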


\iftoggle{abs}{We will now use the lemma above to show a bound of $\tfrac 76$ on the integrality gap if the optimal subtour LP solution is a fractional 2-matching.}{We postpone the proof of the lemma and first prove the implication for the bound on the integrality gap if the optimal subtour LP solution is a fractional 2-matching.}

\begin{theorem}\label{thm:tour76}
There exists a tour of cost at most $\tfrac76$ times the cost of a connected F2M solution if $c(i,j)\in \{1,2\}$ for all $i,j$.
\end{theorem}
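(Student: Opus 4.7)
The plan is to build a tour by extending a partial tour derived from the F2M, taking maximal advantage of cost-1 cycle edges. By Balinski's characterization, a connected F2M is either an integer Hamilton cycle, giving a tour of ratio $1$ immediately, or a single fractional component consisting of odd cycles joined by 1-paths. I focus on the latter case.

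First, I would take the partial tour $T_0$ consisting of all 1-path edges. In $T_0$, every cycle node has degree exactly $1$, since (by the degree constraint in the F2M) each such node is incident to exactly one 1-path edge, while every interior 1-path node has degree $2$. Next, for each odd cycle $C_i$ of length $c_i$, I would apply Lemma~\ref{lem:third} to add matching edges of $C_i$ to the current partial tour, focusing on the cost-1 edges of $C_i$. Specifically, for each maximal cost-1 sub-arc of $C_i$, the path bound yields a matching of size at least $(|A|-1)/3$ consisting of cost-1 edges; if the whole cycle is cost-1, the cycle bound gives $c_i/3$ cost-1 matching edges. Call the union of all such edges added $A'$ and let $s = |A'|$. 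Finally, I would complete $T = T_0 \cup A'$ into a tour by pairing up the remaining degree-$1$ nodes arbitrarily and connecting each pair by a single edge, whose cost is at most $2$ in the 1,2-TSP.

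A short calculation shows that adding a cost-1 edge to the partial tour yields a net savings of $1$ in the tour cost bound (the edge adds $1$ to $c(T)$ but eliminates one closing edge of cost up to $2$), while adding a cost-2 cycle edge is cost-neutral. Hence the final tour has cost at most $F_2 + E_c - s$, where $F_2$ is the total cost of 1-paths and $E_c = \sum_i c_i$ is the total number of cycle edges. The F2M cost is $L = F_2 + \frac{a}{2} + b$, where $a$ and $b$ are the numbers of cost-$1$ and cost-$2$ cycle edges.

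The main obstacle is to verify that the lower bound on $s$ afforded by Lemma~\ref{lem:third}, combined with the natural inequality $F_2 \geq \frac{E_c}{2}$ (each of the $\frac{E_c}{2}$ 1-paths contributes at least one edge to $F_2$), is sufficient to guarantee $F_2 + E_c - s \leq \frac{7}{6} L$. This requires a careful accounting: aggregating over all cycles gives roughly $s \geq \frac{a - b}{3}$, and one must check that this, together with the extra slack in $L$ coming from cost-$2$ cycle edges (each of which inflates $L$ by $1$ while inflating the tour bound by only about $\frac{2}{3}$), is enough to close the gap to $\frac{7}{6}$. I would expect a short case analysis, distinguishing cycles composed predominantly of cost-1 edges (for which the bound on $s$ is strong) from those with many cost-2 edges (for which the additional $b/2$ in $F_1$ provides the needed slack), to complete the proof.
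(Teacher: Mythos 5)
Your overall plan is the same as the paper's (start from the 1-path edges, add cost-1 cycle edges greedily via Lemma~\ref{lem:third}, close the remaining degree-one nodes with cost-2 edges), but the step you explicitly leave open --- ``a careful accounting \ldots a short case analysis'' --- is precisely where the theorem is won or lost, and the inequalities you propose to use are not sufficient. Write $a$ and $b$ for the numbers of cost-1 and cost-2 cycle edges, so $k=a+b$ is the number of cycle nodes. Your tour bound is $F_2+k-s$ and $L=F_2+\frac a2+b$. Aggregating the path bound of Lemma~\ref{lem:third} over the maximal cost-1 arcs gives only $s\ge \frac{a-b}{3}$, and the only lower bound you have on the 1-path cost is $F_2\ge \frac{a+b}{2}$. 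Plugging these in yields $\frac76 L-(F_2+k-s)\ge -\frac{b}{12}$, i.e., a guarantee of $\frac76 L+\frac{b}{12}$, which misses the target whenever there are cost-2 cycle edges. The ``extra slack from cost-2 cycle edges'' you invoke does not exist: it is already consumed in this computation. The bad configurations are real, e.g.\ isolated cost-1 cycle edges whose endpoints are joined by a 1-path (so the edge cannot be added and the arc contributes nothing), and no arbitrary pairing of leftover degree-one nodes recovers the loss. (Also, minor point: the closing edges must be chosen to form a single Hamiltonian cycle, not an arbitrary perfect pairing of degree-one endpoints, though this is easy to arrange.)

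The paper closes exactly this gap with three ingredients you are missing. First, by integrality of $|A'|$, the $\frac13$ loss occurs only for arcs with $|A|\equiv 1 \pmod 3$ (the set ${\cal P}_1$). Second, a preprocessing step tries to add an end edge of each such arc; pairing each failed arc with the successful arc that blocked it shows the loss survives only for ``eared'' arcs, whose end nodes are joined by an incident 1-path to a neighboring node of the arc, giving $s\ge \frac13 a-\frac13 z$ with $z$ the number of eared arcs. Third, since the support graph is simple, an eared arc forces an incident 1-path of length two, so $c(P)\ge \frac k2+z$, and moreover $z\le p=b$; these two facts supply exactly the missing $\frac{b}{12}$ (in fact $\frac{3b-2z}{12}\ge 0$) in the final algebra. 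Without some substitute for this eared-path analysis, your argument proves a ratio strictly worse than $\frac76$ in the presence of cost-2 cycle edges, so the proposal as it stands has a genuine gap rather than just an unfinished calculation.
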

\begin{proof}
Let $P = \{ e \in E : x(e) = 1 \}$ (the edges in the 1-paths of $x$).
We will start the algorithm with $T = P$. Let $R = \{ e \in E : x(e) = \tfrac 12 \mbox{ and } c(e) = 1 \}$ (the edges of cost $1$ in the cycles of $x$). 
Note that the connected components of the graph $(V,R)$ consist of paths and odd cycles. The main idea is that we consider these components one by one, and use Lemma~\ref{lem:third} to show that we can add a large number of the edges of each path and cycle, where we keep as an invariant that $T$ is a partial tour.
Note that by Lemma~\ref{lem:third}, the number of edges added from each path or cycle $A$ is at least $|A|/3$, except for the paths for which $|A| \equiv 1 \pmod 3$. Let ${\cal P}_1$ be this set of paths.
We would like to claim that we add a third of the edges on average from each component, and we therefore preprocess the paths in ${\cal P}_1$, where we add one edge (either the first or last edge from each path in ${\cal P}_1$) to $T$ if this is possible without creating a cycle in $T$, and if so, we remove this edge and its neighboring edge in $R$ (if any) from $R$. After the preprocessing, we use Lemma~\ref{lem:third} to process each of the components in $(V,R)$.

We call a path $A$ in ${\cal P}_1$ ``eared'' if the 1-paths that are incident on the first and last node of the path are such that they go between two neighboring nodes of $A$. \iftoggle{abs}{It is not hard to show that we can add an edge from at least half of the paths in ${\cal P}_1$ that are not eared.}{It is not hard to see that we can add an edge from at least half of the paths in ${\cal P}_1$ that are not eared:
If we cannot add either the first or the last edge from a path $A$ in ${\cal P}_1$, and $A$ is not eared, then it must be the case that either the first or the last edge forms a cycle with an edge that was added to $T$ from another path in ${\cal P}_1$. 
}

\iftoggle{abs}{}{After preprocessing the paths in ${\cal P}_1$, we iterate through the connected components in $(V,R)$ and add edges to $T$ while maintaining that $T$ is a partial tour.
By Lemma~\ref{lem:third}, the number of edges added from each path or cycle $A$
is at least $|A|/3$, except for the paths in ${\cal P}_1$. }
We now consider two cases for the paths in ${\cal P}_1$, depending on whether we added an edge from the path to $T$ in the preprocessing step or not. 
Note that for a path $A$ in ${\cal P}_1$
for which we added an edge to $T$ in the preprocessing step, $R$ contains a path of $|A|-2$ edges after the preprocessing step, and by Lemma~\ref{lem:third}, we add at least $(|A|-2-1)/3$ of these to $T$. Together with the edge added in the preprocessing step, we thus add at least $1+(|A|-2-1)/3 = |A|/3$ edges. 
For a path in ${\cal P}_1$ for which we did not add an edge to $T$ in the preprocessing stage, we add at least $ (|A|-1)/3$ edges.
Now, recall that a path $A$ in ${\cal P}_1$ has $|A|\equiv 1\pmod 3$, and that the number of edges added is an integer, so in the first case, the number of edges added is at least $|A|/3 + \tfrac 23$ and in the second case it is $|A|/3-\tfrac 13$.
Let $z$ be the number of eared paths in ${\cal P}_1$. Then, the number of paths in ${\cal P}_1$ that are in the second case is at most $z$ plus the number of paths in ${\cal P}_1$ that fall in the first case. Hence, the total number of edges from $R$ that were added to $T$ can be lower bounded by $\tfrac 13 |R|-\tfrac 13 z$. We now give an upper bound on the number of nodes of degree one in $T$.

Let $k$ be the number of cycle nodes in $x$, i.e. $k = \# \{ i \in V : x(i,j) = \tfrac 12\mbox{ for some $j\in V$} \}$, and let $p$
be the number of cycle edges of cost 2 in $x$, i.e. $p = \# \{ e \in E : x(e) = \tfrac 12 \mbox{ and } c(e) = 2 \} $.
Note that $(V,R)$ contains $p$ paths (which may have zero edges) on the cycle nodes, and hence $p\ge z$.
Initially, when $T$ contains only the edges in the 1-paths, all $k$ nodes have degree one, and there are $k-p$ edges in $R$. We argued that we added at least $\tfrac 13 |R|-\tfrac 13 z = \tfrac 13 k - \tfrac 13 p-\tfrac 13 z$ edges to $T$. Each edge reduces the number of nodes of degree one by two, and hence, the number of nodes of degree one at the end of the algorithm is at most $k - 2(\tfrac 13 k - \tfrac 13 p - \tfrac 13 z) =\tfrac 13 k + \tfrac 23 p + \tfrac 23 z$.
Recall that $c( P )$ denotes the cost of the 1-paths, and the total cost of $T$ at the end of the algorithm is at most 
$c(P)+\tfrac 13 k -\tfrac 13p-\tfrac 13z$. Since at most $\tfrac 13 k + \tfrac 23 p + \tfrac 23 z$ nodes have degree one in $T$, we can extend $T$ into a tour of cost at most $c(P) + \tfrac 23 k +\tfrac 13 p +\tfrac 13 z$.

The cost of the solution $x$ can be expressed as $c(P) + \tfrac 12 k + \tfrac 12 p$.
Note that each 1-path connects two cycle nodes, hence $c(P)\ge \tfrac 12 k$.
Moreover, an eared path $A$ is incident to one (if $|A|=1$) or two (if $|A|>1$) 1-paths of length two, since the support graph of $x$ is simple.
Therefore we can lower bound $c( P )$ by $\tfrac 12 k + z$.
Therefore, $\tfrac 76\left(c(P) + \tfrac 12 k + \tfrac 12 p\right) \ge c(P) + \tfrac 1{12}k + \tfrac 16 z+\tfrac 7{12}k +\tfrac 7{12} p
\ge c(P)+\tfrac 23k +\tfrac 13 z + \tfrac 13 p$, where $p \geq z$ is used in the last inequality.
\end{proof}

\begin{proofof}{Lemma~\ref{lem:third}}
The basic idea behind the proof of the lemma is the following: We go through the edges of $A$ in order, and try to add them to $T$ if this does not create a cycle or node of degree three in $T$. If we cannot add an edge, we simply skip the edge and continue to the next edge.  Since the edges in $T$ form a collection of disjoint paths and each node in $A$ has degree one in $T$, we can always add either the first edge or the second edge of $A$: if the first edge cannot be added, then adding it to $T$ must create a cycle, and since the edges in $T$ form a collection of node disjoint paths, adding the second edge of the path or cycle to $T$ cannot create a cycle.
Similarly, we need to skip at most two edges between two edges that are successfully added to $T$: first, an edge is skipped because otherwise we create a node of degree three in $T$, and if a second edge is skipped, then this must be because adding that edge to $T$ would create a cycle. But then, adding the next edge on the path cannot create a cycle in $T$.

To lower bound the number of edges from we can add from each path or cycle $A$, we partition the edges into groups of two or three consecutive edges.
For a path $A$, the first group contains the first two edges, and each subsequent group contains the next three edges. The final group contains the last zero, one or two edges of the path. For each group except the last group, at least one edge is added to $T$. 
Hence, we can conclude that we can add at least $(|A|-4)/3$ from the groups of size three, and 1 for the first group, for a total of $(|A|-1)/3$ edges, where $|A|$ denotes the number of edges in $A$.
For a cycle $A$, we need to be slightly more careful, since the argument that we can add at least one edge from the last group of size three does not hold if the very first edge was added to $T$ (since it may be the case that the first and third edge of the group cannot be added without creating a node of degree three, and the second edge of the group cannot be added without creating a cycle). Therefore, we let the first group contain two consecutive edges, where the {\it second} edge is the edge that was the first to be added to $T$. By the same argument as for the path, we can thus conclude that we can add at least $(|A|-1)/3$ edges.

We now show that by being a little more careful, we can in fact add $|A|/3$ edges if $A$ is a cycle.
Note that the number of nodes in $A$ is odd, and hence there must be some $j$ such that the path in $T$ that starts in $u_j$ ends in some node $v\not\in A$.
We claim that if we consider the edges in $A$ starting with either edge $\{u_{j-1}, u_j\}$ or edge $\{u_j, u_{j+1}\}$, we are guaranteed that for at least one of these starting points, we can add both the first and the third edge to $T$.

Clearly, neither $\{u_{j-1}, u_j\}$ nor $\{u_j, u_{j+1}\}$ can create a cycle if we add it to $T$.
So suppose that $T\cup \{u_{j-1},u_{j}\} \cup \{u_{j+1}, u_{j+2}\}$ contains a cycle. This cycle does not contain the node $u_j$, because the path in $T$ that starts in $u_j$ ends in some node $v\not\in C$. Hence $T$ contains a path that starts in $u_{j+1}$ and ends in $u_{j+2}$. But then $T \cup \{u_{j},u_{j+1}\} \cup \{u_{j+2}, u_{j+3}\}$ does not have a cycle, since if it did, $T$ must have a path starting in $u_{j+2}$ and ending in $u_{j+3}$ which is only possible if $u_{j+1}=u_{j+3}$. Since the number of nodes in $A$ is at least three, this is not possible.
\end{proofof}

We remark that the ratio of $\tfrac76$ in Theorem~\ref{thm:tour76} is achieved if every 1-path contains just one edge of cost 1, and all cycle edges have cost 1. However, in such a case, we could find another optimal F2M solution of the same cost, which has fewer cycle edges:
 If we have a 1-path of cost 1 with endpoints in two different odd cycles of edges with $x(e)=\tfrac 12$, we can obtain the alternative solution by removing the 1-path, and increasing the $x$-value on the four cycle edges incident on its endpoints to 1, and then alternating between setting the $x$-value to 0 and 1 around the cycles. Now, since the cycles are odd, the degree constraints are again satisfied. The objective value does not increase because we only change the $x$-value on edges of cost 1. For a 1-path of cost 1 with endpoints in the same odd cycle, the cycle gives us two paths between the endpoints, one of odd length and one of even length. We can alternate increasing and decreasing the $x$-value by $\tfrac12$ on the odd-length path and finally decrease the $x$-value for the 1-path to $\tfrac12$, to obtain a new F2M solution of the same cost with fewer cycle edges. We note that these modifications may increase the number of components of the F2M solution.

This motivates the following definition. We call an F2M solution {\it canonical}, if all edges in the support have cost 1 and all 1-paths contain at least two edges. If a canonical F2M solution is connected, we can improve the analysis in Theorem~\ref{thm:tour76} to show the following.
\begin{theorem}\label{thm:tour109}
There exists a tour of cost at most $\tfrac {10}9$ times the cost of a  connected canonical F2M solution if $c(i,j)\in \{1,2\}$ for all $i,j$.
\end{theorem}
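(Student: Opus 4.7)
The plan is to apply the same algorithm as in the proof of Theorem~\ref{thm:tour76}, which constructs a partial tour and extends it to a tour of cost at most $c(P) + \tfrac{2}{3}k + \tfrac{1}{3}p + \tfrac{1}{3}z$. Showing this is at most $\tfrac{10}{9}$ times the F2M cost $c(P) + \tfrac{1}{2}k + \tfrac{1}{2}p$ reduces, by an algebraic manipulation, to the inequality
\[
c(P) \;\geq\; k + 3z - 2p.
\]
This substantially strengthens the bound $c(P) \geq \tfrac{1}{2}k + z$ that sufficed for the $7/6$ ratio, and to establish it I will use both the canonicality and the 2-connectedness assumptions.

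First I would use 2-connectedness to constrain the F2M structure. Since each cycle node is incident to exactly one 1-path edge, a lone 1-path joining two distinct odd cycles would be a bridge of $x$-value~$1$, violating 2-connectedness; thus cross-cycle 1-paths come in bundles of size at least~$2$ between any pair of cycles. Combined with the parity of the odd cycle sizes, this gives strong structural control over how many cross-cycle 1-paths a component can contain and how many of its 1-paths are within-cycle ears.

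Next I would use canonicality: any cost-$1$ 1-path (necessarily a single cost-$1$ edge) between two distinct cycles has at least one of its four incident cycle edges of cost~$2$. Since each cost-$2$ cycle edge is incident to at most two 1-paths (one per endpoint), this gives the amortized bound $b_2 \leq 2p$, where $b_2$ denotes the number of length-$1$ cost-$1$ cross-cycle 1-paths. In effect, cheap short cross-cycle 1-paths are paid for by the $2p$ term on the right-hand side of the target inequality.

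Finally I would assemble a charging argument. Each 1-path of length $\geq 2$ contributes at least $1$ of extra cost above the per-1-path baseline of $1$, each cost-$2$ length-$1$ 1-path similarly contributes $1$, and (as already used in the $7/6$ proof) each eared path in $\mathcal{P}_1$ comes with one or two attached length-$2$ 1-paths, each contributing an extra $\geq 1$. Summing the baseline $k/2$, the extras from non-Type-B 1-paths, the eared-path contributions, and the amortized $b_2 \leq 2p$ bound should give $c(P) \geq k + 3z - 2p$. The main obstacle will be the careful bookkeeping: a single cost-$2$ cycle edge may simultaneously witness canonicality for two distinct Type-B cross-cycle 1-paths and also lie adjacent to an endpoint of an eared path, while a length-$2$ 1-path can be shared by two different eared paths at its two endpoints. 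Sorting these interactions out without double-counting will, I expect, require a per-component case analysis, with the subtlest cases being those in which cost-$2$ cycle edges must perform multiple duties at once.
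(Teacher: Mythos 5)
Your reduction to the inequality $c(P) \geq k + 3z - 2p$ is algebraically correct, and your canonicality step (each cost-$1$ single-edge $1$-path between distinct cycles must see a cost-$2$ cycle edge, and each such cycle edge serves at most two $1$-paths, giving at most $2p$ cheap cross-cycle $1$-paths) is exactly the ingredient the paper uses to get $c(P) \geq k - 2p$. But there is a genuine gap: you have missed the one observation that makes the theorem short, namely that $2$-connectedness forces $z = 0$ outright. An eared path in ${\cal P}_1$ means there is a $1$-path joining two nodes $i,j$ that are also joined by a cycle edge; taking $S$ to be $\{i,j\}$ together with the interior nodes of that $1$-path, the only edges of the support crossing $\delta(S)$ are the two other cycle edges at $i$ and $j$, of total $x$-value $1 < 2$, contradicting $2$-connectedness. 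With $z=0$ the target inequality collapses to $c(P) \geq k - 2p$, which your $b_2 \leq 2p$ argument (plus the fact that every non-cheap $1$-path costs at least $2$) already delivers, and the $\frac{10}{9}$ computation finishes the proof.

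Because you did not notice this, your plan fights for the stronger statement $c(P) \geq k + 3z - 2p$ with $z$ possibly positive, and as sketched the accounting does not close: each eared path is accompanied by only one or two length-$2$ $1$-paths, each contributing an extra of at most $1$, so your charging yields roughly $2$ per eared path where the coefficient $3z$ demands $3$ (and those extras are already partially counted in the ``non-cheap $1$-path'' term, so there is double-counting to untangle as you acknowledge). Moreover, your structural use of $2$-connectedness --- that cross-cycle $1$-paths cannot occur as a lone bridge and hence come in bundles of at least two --- constrains the wrong configuration: what controls $z$ is a $1$-path between two nodes \emph{adjacent on the same cycle}, and that is precisely the configuration the cut argument above excludes. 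So the proposal is not a complete proof, and its intended completion is both harder than necessary and not clearly achievable along the sketched lines.
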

\begin{proof}
We adapt the final paragraph of the proof of Theorem~\ref{thm:tour76}. As before, the cost of the tour is at most $c(P)+\tfrac23k + \tfrac13p+\tfrac13z$. However, since all cycle edges have cost 1, $p=0$ and $z=0$. The cost of the tour is thus at most $c(P)+\tfrac23k$.

The cost of the F2M solution is $c(P)+\tfrac12k$. Since each cycle node is the endpoint of a 1-path and vice versa, the number of 1-paths is $k/2$. By the fact that $x$ is canonical, each of these 1-paths has cost at least two, so we get that $c( P ) \geq k$.
The proof is concluded by noting that then $\tfrac{10}9\left(c(P)+\tfrac 12k\right) \geq c(P)+ \tfrac 19 k + \tfrac{10}9\cdot\tfrac 12 k  = c(P)+\tfrac23k$.
\end{proof}

\section{An upper bound of $\tfrac{5}{4}$ on the integrality gap}\label{sec:better}
We now show how to use the results in the previous two sections to obtain an upper bound of $\tfrac{5}{4}$ on the integrality gap for the general case.
In addition, we show that if all edges in the support of the optimal subtour LP solution have cost 1, then the integrality gap is at most $\tfrac{26}{21}$.

We will bound the integrality gap of the solution obtained by the Papadimitriou-Yannakakis algorithm, by (i) bounding the difference between the cost of the 2M and the subtour LP, and (ii) bounding the difference between the 2M solution and the tour constructed from it by the Papadimitriou-Yannakakis algorithm.

As in the Papadimitriou-Yannakakis algorithm described in Section~\ref{sec:12gap}, we call a cycle in a 2M a ``pure'' cycle if all its edges have cost 1, and a ``non-pure'' cycle otherwise. 
The idea behind this section is to show that the quantity in (i) can be ``charged'' to the nodes in the non-pure cycle only, and that the quantity in (ii) can be ``charged'' mainly to the nodes in the pure cycles.

We first state the following lemma, which formalizes the second statement. 
\begin{lemma}\label{lem:pure-nonpure}
If $OPT(\SUBT) < n+1$, then the difference between the cost of the 2M used and the tour constructed by the Papadimitriou-Yannakakis algorithm can be upper bounded by $\alpha n_\text{pure} + \beta (n_\text{non-pure}-\ell)$, where
$n_\text{pure}$ is the number of nodes in pure cycles in the 2M, $n_\text{non-pure}$ is the number of nodes in the non-pure cycle, and $\ell$ is the number of edges of cost 2 in the non-pure cycle, 
for any values of $\alpha, \beta$ so that $9\alpha \ge 2$ and $3\alpha + 2\beta \ge 1$.
\end{lemma}
Note that Lemma~\ref{lem:PYsubt} and the assumption that $OPT(\SUBT)<n+1$ imply that the Papadimitriou-Yannakakis algorithm finds a bipartite matching that matches all the pure cycles.
A careful look at the analysis of Papadimitriou and Yannakakis~\cite{PapadimitriouY93} then shows that their algorithm finds a tour which satisfies the lemma. The details basically follow the analysis of Papadimitriou and Yannakakis, and are therefore postponed to Appendix~\ref{app:purenonpure}.

The key observation in this section is that we can indeed restrict our attention to instances with $OPT(\SUBT) < n+1$, the requirement of Lemma~\ref{lem:pure-nonpure}.
\begin{lemma}\label{lem:worstcase}
The worst-case integrality gap is attained on an instance with subtour LP value less than $n+1$, where $n$ is the number of nodes in the instance.
\end{lemma}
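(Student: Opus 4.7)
My plan is to show that any 1,2-TSP instance $I$ with $n$ nodes and $\OPT(\SUBT)(I)\ge n+1$ can be replaced by a modified instance $I'$ whose integrality gap is at least as large and which has $\OPT(\SUBT)(I') < n(I')+1$. Throughout, let $x^*$ be an optimal subtour LP solution for $I$ and set $\alpha := \OPT(\SUBT)(I) - n$, so that $\alpha\ge 1$. Since the degree constraints force $\sum_e x^*(e)=n$, one has $\alpha = \sum_{c(e)=2}x^*(e)$, and thus the support of $x^*$ contains at least one cost-$2$ edge; similarly, because $\OPT(I)\ge \OPT(\SUBT)(I)>n$, every optimal tour of $I$ uses at least one cost-$2$ edge.

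The main construction introduces a ``universal'' vertex $w$ with $c'(w,v)=1$ for every $v\in V$, yielding $I'$ on $n'=n+1$ nodes. First, $\OPT(I')=\OPT(I)$: inserting $w$ between the endpoints of a cost-$2$ edge in an optimal tour of $I$ changes cost by $-2+1+1=0$, and conversely shortcutting $w$ out of any tour of $I'$ can only decrease the cost by the triangle inequality. Second, the scaled LP $y(e)=(1-\tfrac{1}{n})\,x^*(e)$ on $E(V)$ together with $y(w,v)=\tfrac{2}{n}$ on the new edges is feasible (a direct check confirms the degree and subtour constraints for cuts both containing and avoiding $w$) and has cost $(1-\tfrac{1}{n})\,\OPT(\SUBT)(I)+2$. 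Therefore the excess satisfies $\alpha(I')\le \alpha(I)\,(1-\tfrac{1}{n})$, so after finitely many iterations we reach an instance $I^{\dagger}$ with $\OPT(\SUBT)(I^{\dagger})<n(I^{\dagger})+1$ while $\OPT$ never changes.

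The main obstacle is arranging this reduction to preserve (not just bound) the integrality gap. The bound above gives $\OPT(\SUBT)(I')\le\OPT(\SUBT)(I)$ only when $\OPT(\SUBT)(I)\ge 2n$, so the naive iteration can lose a factor in the intermediate regime $n+1\le \OPT(\SUBT)(I)<2n$. To close this gap I would combine the universal-vertex construction with the following structural reduction: if some cost-$2$ edge $e=(u,v)$ in the support of $x^*$ is \emph{not} contained in every optimal tour of $I$, then lowering $c(u,v)$ from $2$ to $1$ strictly decreases $\OPT(\SUBT)$ by $x^*(e)$ while leaving $\OPT$ unchanged, strictly increasing the integrality gap. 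Hence in any worst-case $I$ we may assume that every cost-$2$ edge in the support of $x^*$ lies in every optimal tour of $I$. Using this rigidity, I would refine the extension of $x^*$ to $I'$ by routing the weight on a selected cost-$2$ support edge $(u,v)$ through the new vertex $w$ (setting $y(u,w)=y(w,v)=x^*(u,v)$ and $y(u,v)=0$, then topping up $w$'s degree via a small amount of scaled weight elsewhere), so that the degree constraint at $w$ is met without paying the full $+2$ overhead. Verifying that this refined LP is feasible, still has cost strictly less than $\OPT(\SUBT)(I)$, and preserves $\OPT(I')=\OPT(I)$---and then iterating to reach $\OPT(\SUBT)<n+1$---is the technical crux of the proof.
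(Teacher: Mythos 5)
There is a genuine gap here, even though all the needed pieces appear in your own first paragraph. Your primary construction (a universal cost-1 vertex $w$ together with the scaled solution $y=(1-\frac{1}{n})x^*$ plus $y(w,v)=\frac{2}{n}$) only certifies $\OPT(\SUBT)(I')\le(1-\frac{1}{n})\OPT(\SUBT)(I)+2$, which implies the needed inequality $\OPT(\SUBT)(I')\le\OPT(\SUBT)(I)$ only when $\OPT(\SUBT)(I)\ge 2n$. For 1,2 costs the subtour LP value never exceeds $2n$, so the ``intermediate regime'' you flag is in fact the only regime; moreover, for metric costs the subtour bound does not decrease when a vertex is added, so no refinement of this scaled certificate can work --- one must exhibit a feasible solution for $I'$ of cost at most $\OPT(\SUBT)(I)$. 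The patch you propose is also flawed as stated: lowering $c(u,v)$ from $2$ to $1$ leaves $\OPT$ unchanged only if \emph{no} optimal tour uses $(u,v)$; under your hypothesis that $(u,v)$ is merely not in \emph{every} optimal tour, some optimal tour may still use it, in which case $\OPT$ drops by $1$ and the gap comparison breaks. Hence the claimed rigidity (``every cost-2 support edge lies in every optimal tour'') is not established, and the final rerouting step, which you yourself call the technical crux, is left unverified.

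The paper's proof needs no rigidity assumption and no topping-up. You already observed that the total $x^*$-weight on cost-2 edges equals $\alpha=\OPT(\SUBT)(I)-n\ge 1$. Add one node $i$ with cost-1 edges to every node incident to a cost-2 edge in the support of $x^*$ (all other new edges get cost 2), and reroute a total of \emph{one unit} of weight from cost-2 edges $(j,k)$ onto the pairs $(j,i),(i,k)$, drawing from several cost-2 edges if necessary --- possible precisely because their total weight is at least $1$. Each rerouted unit changes the cost by $-2+1+1=0$, node $i$ receives degree exactly $2$, all other degrees and all cut values are maintained, so $\OPT(\SUBT)(I')\le\OPT(\SUBT)(I)$ while $n'=n+1$; shortcutting $i$ out of a tour of $I'$ gives $\OPT(I')\ge\OPT(I)$, so the ratio does not decrease and the excess of the LP value over the number of nodes drops by at least $1$ per iteration. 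Your closing sketch (routing the weight of a single selected cost-2 edge through $w$ and ``topping up'' with scaled weight) is exactly this idea in weakened form, but topping up via scaling reintroduces the cost overhead; rerouting a full unit spread over several cost-2 edges is what makes the certificate cost-neutral and completes the proof.
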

\iftoggle{abs}{The idea behind the proof 
is that, if $\lfloor OPT(\SUBT)\rfloor= n+k$, then the total $x$-value on edges with cost 2 is at least $k$. We can add $k$ nodes and for each new node, add edges of cost 1 to each existing node. We obtain a feasible subtour solution for the new instance with the same cost as the solution for the original instance, by rerouting one unit of flow from edges with cost 2 to go through each new node. Also, the cost of the optimal tour on the new instance is at least the cost of the optimal tour on the original instance, and hence, the integrality gap of the new instance is at least the integrality gap of the original instance.}{
\begin{proof}
Consider an instance $I$ on $n$ nodes for which the ratio between the length of the optimal tour and the subtour LP value is $\gamma$, and suppose $OPT(SUBT) = n+k$ for some $k\ge 1$.
We construct an instance $I'$ with $n'=n+1$ nodes, for which the ratio between the length of the optimal tour and the subtour LP value is at least $\gamma$, and the optimal value of the subtour LP is at most $n+k=n'+k-1$. Repeatedly applying this procedure proves the lemma.

If $OPT(SUBT)=n+k$, then the subtour LP solution on $I$ has a total $x$-value of $k$ on edges of cost 2, since the objective value is equal to $\sum_{e\in E} x(e) + \sum_{e\in E: c(e)=2} x(e)$, and $\sum_{e\in E} x(e)=\tfrac 12 \sum_{v\in V}\sum_{e\in \delta(v)} x(e)=n$.
We fix an optimal subtour solution $x$, and we construct $I'$ from $I$, by adding one node $i$, and adding edges $(i,j)$ of cost 1, for every $j$ in $I$ that is incident on an edge $e$ with $c(e)=2$ and $x(e)>0$.
All other edges incident on $i$ get cost 2.
Note that the optimal tour on $I'$ has length at least the length of the optimal tour on $I$, since we can take a tour on $I'$ and shortcut $i$ to obtain a tour on $I$. On the other hand, we can use $x$ to define a feasible solution on $I'$, by ``rerouting'' one unit in total from edges $e=(j,k)$ with $c(e)=2$ to the edges $(j,i)$ and $(i,k)$. Since the cost of this solution on $I'$ is the same as the cost of $x$, the ratio between the length of the optimal tour and the subtour LP value has not decreased.\mqed\end{proof}}
\begin{remark}
We note that the proof of Lemma~\ref{lem:worstcase} implies that to compute integrality gaps or approximation guarantees, we may assume without loss of generality that an instance has an optimal subtour LP value of at most $n+1$, where $n$ is the number of nodes in the instance. If this does not hold, we may add nodes as in the proof of Lemma~\ref{lem:worstcase} without increasing $OPT(\SUBT)$, and a tour of cost $C$ on the extended instance can be shortcut to a tour on the original instance of cost at most $C$.
\end{remark}

\begin{theorem}\label{thm:final}
The integrality gap of the subtour LP is at most $\tfrac{5}{4}$ for the 1,2-TSP, and it is at most $\tfrac{26}{21}$ for 1,2-TSP instances for which $OPT(\SUBT)<n+\tfrac12$, where $n$ is the number of nodes in the instance.
\end{theorem}
\begin{proof}
By Lemma~\ref{lem:worstcase}, we can assume without loss of generality that $OPT(\SUBT)$ $< n+1$.
To compute a tour, we first drop the subtour elimination constraints and find an optimal F2M solution. Since the F2M problem is a relaxation of the subtour LP, and it is half-integral, its objective value is either $n+\tfrac 12$ or $n$. 

We first consider the case $OPT(\SUBT)<n+\tfrac12$, in which case the optimal F2M solution has objective value $n$. Since all edges in the support of the F2M solution have cost 1, we may assume by the arguments preceding Theorem~\ref{thm:tour109} that all 1-paths contain at least two edges of cost 1; in other words, we may assume the components of the F2M solution are canonical. By applying Theorem~\ref{thm:tour109} we convert each fractional component of the F2M solution into a cycle on the nodes in the component.

Note that each cycle that is the result of applying Theorem~\ref{thm:tour109} contains at least one edge of cost 2. By the observation of Papadimitriou and Yannakakis~\cite{PapadimitriouY93}, we may merge these into a single non-pure cycle. The integer components of the F2M solution are pure cycles, since the support of the F2M solution only contains edges of cost $1$. We let $n_\text{pure}$ be the number of nodes in the pure cycles (or, equivalently, in the integer components of the F2M solution), and let $n_\text{non-pure}$ be the number of nodes in the non-pure cycle (or, equivalently, the number of nodes in the fractional components of the F2M solution).
Let $\ell$ be the number of cost 2 edges in the computed 2-matching. By Theorem~\ref{thm:tour109}, $\ell \le \tfrac{1}9n_\text{non-pure}$.

If we apply the Papadimitriou-Yannakakis algorithm to this 2-matching, this increases the cost by at most $\alpha n_\text{pure} + \beta (n_\text{non-pure}-\ell)$, 
provided that $9\alpha\ge 2$ and $3\alpha+2\beta\ge1$ by Lemma~\ref{lem:pure-nonpure}. 
Choosing $\alpha=\tfrac 5{21}, \beta=\tfrac17$, we thus find that the total cost of the tour is at most $n+\ell+\tfrac5{21}n_\text{pure}+\tfrac 17 n_\text{non-pure} - \tfrac 17 \ell\le n+\tfrac 5{21}n_\text{pure} + (\tfrac 17+\tfrac 67\cdot \tfrac 19) n_\text{non-pure} = (1+\tfrac 5{21})n$, where we used the fact that $\ell\le \tfrac19n_\text{non-pure}$.

If $n+\tfrac12 \le OPT(\SUBT)<n+1$, the optimal F2M solution has cost at most $n+\tfrac12$. We temporarily decrease the cost of the unique cost-2 edge in the F2M to 1, and follow the same procedure as above, to find a 2-matching. Let $n_\text{non-pure}$ be the number of nodes in the non-pure cycle, and note that $n_\text{non-pure}$ is at least 9, since a fractional component of a canonical F2M solution contains at least two odd cycles, containing at least six nodes, and at least three 1-paths, containing at least one additional node each.

Let the cost of this 2-matching (with respect to the true costs) be $n+\ell$, where by Theorem~\ref{thm:tour109}, $\ell-1\le \tfrac 19 n_\text{non-pure}$.
As in the case when $OPT(\SUBT)<n+\tfrac12$, we apply the Papadimitriou-Yannakakis algorithm to this 2-matching, and by Lemma~\ref{lem:pure-nonpure} this increases the cost by at most $\alpha n_\text{pure} + \beta (n_\text{non-pure}-\ell)$. We now choose $\alpha=\tfrac14, \beta=\tfrac18$, to get that the total cost of the tour is at most
$n+\ell+\tfrac14n_\text{pure}+\tfrac18n_\text{non-pure}-\tfrac18\ell= n+\tfrac14n_\text{pure}+\tfrac98n_\text{non-pure}+\tfrac78(\ell-1)+\tfrac78$.
Now, recall that $\ell-1\le \tfrac19n_\text{non-pure}$ and that $n_\text{non-pure}\ge9$, and thus $\tfrac78\le \tfrac58 + \tfrac14\cdot\tfrac19 n_\text{non-pure}$.
Hence, we can upper bound the cost of the tour by $n+\tfrac14n_\text{pure}+ (\tfrac 98 + \tfrac78\cdot \tfrac19 +\tfrac14 \cdot \tfrac19)n_\text{non-pure}+\tfrac 58
=\tfrac54(n+\tfrac12) \le \tfrac54 OPT(\SUBT)$.
\mqed\end{proof}

\begin{remark}
The bound of $\tfrac54$ in Theorem~\ref{thm:final} may be marginally improved by a more careful analysis of small instances. It appears that in order to decrease the bound to $\tfrac{10}9$, or even $\tfrac{11}9$, more substantial new ideas are needed, however. 
\end{remark}
\section{Computational results}
\label{sec:comp}

In the case of the 1,2-TSP, for a fixed $n$ we can generate all instances as follows. For each value of $n$, we first generate all nonisomorphic graphs on $n$ nodes using the software package NAUTY~\cite{McKay81}. We let the cost of edges be one for all edges in $G$ and let the cost of all other edges be two. Then each of the generated graph $G$ gives us an instance of 1,2-TSP problem with $n$ nodes, and this covers all instances of the 1,2-TSP for size $n$ up to isomorphism.

In fact, we can do slightly better by only generating biconnected graphs. We say that a graph $G=(V,E)$ is {\em biconnected} if it is connected and there is no vertex $v \in V$ such that removing $v$ disconnects the graph; such a vertex $v$ is a {\em cut vertex}. It is possible to show that the subtour LP value is at least $n+1$ if $G$ is not biconnected, hence, by Lemma~\ref{lem:worstcase} it suffices to consider biconnected graphs.  However, the proof of Lemma \ref{lem:worstcase} involves adding additional new nodes (perhaps many of them).  Using a similar technique to the one in the proof of Lemma~\ref{lem:worstcase}, one can show that given a graph on $n$ vertices, there is a biconnected graph on at most $n+2$ vertices that has no better ratio of optimal tour to subtour LP value. 
In Appendix~\ref{app:twolemmas} we prove two lemmas that imply the following corollary.

\begin{corollary}\label{cor:12biconn}
Let $G=(V,E)$ be the graph of cost 1 edges in a 1,2-TSP instance.  Then if $G=(V,E)$ is not biconnected, there exists a biconnected $G'=(V',E')$ with $|V'|\le |V|+2$ such that $OPT(G)/\SUBT(G) \leq OPT(G')/\SUBT(G')$.
\end{corollary}

For each instance of size $n$, we solve the subtour LP and the corresponding integer program using CPLEX 12.1~\cite{cplex121} and a Macintosh laptop computer with dual core 2GHz processor and 1GB of memory.
It is known that the integrality gap is 1 for $n \leq 5$, so we only consider problems of size $n \geq 6$.   The results are summarized in Table~\ref{tab:12tab}.
\begin{table}[t]
\begin{center}
\begin{tabular}{|c|c|c|c|c|c|c|c|c|}
\hline
$n$ & 6 & 7 & 8 & 9 & 10 & 11& 12 \\
\hline
Subtour IP/LP ratio & $8/7.5$ &  $8/7.5$ &  $9/8.5$ & $10/9$ & $11/10$  & 12/11 & 13/12 \\
\# graphs & 56 & 468 & 7,123 & 194,066 & 9,743,542 & 900,969,091 & $-$ \\
\hline
\end{tabular}
\end{center}
\caption{The subtour LP integrality gap for 1,2-TSP for $6 \leq n \leq 12$, where the ratio for $6 \leq n \leq 10$ is only on biconnected graphs.  The second row shows the number of nonisomorphic biconnected graphs for $6 \leq n \leq 11$.}
\label{tab:12tab}
\iftoggle{abs}{\vspace*{-\baselineskip}}{}
\end{table}
For $n = 11$, the number of nonisomorphic biconnected graphs is nearly a billion and thus too large to consider, so we turn to another approach.
\iftoggle{abs}{For $n = 11$ and $n=12$, we use the fact that we know a lower bound on the integrality gap of $\alpha_n = \frac{n+1}n$, namely for the instances we obtain by adding two or three additional nodes to one of the 1-paths in the example in Figure~\ref{fig:12tsp}. }
{
\begin{figure}
\begin{center}
\includegraphics[height=.75in]{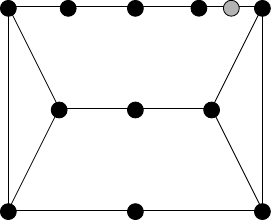}
\end{center}
\caption{Illustration of the instances with integrality gap at least $\frac{12}{11}$ for $n=11$ (without the grey node) and $\frac{13}{12}$ for $n=12$ (with the grey node) for
the 1,2-TSP.  All edges of cost 1 are shown.}\label{fig:badexlarger}
\end{figure}
For $n = 11$ and $n=12$, we use the fact that we know a lower bound on the integrality gap of $\frac{n+1}n$, namely for the instances depicted in Figure~\ref{fig:badexlarger}. The claimed lower bounds on the integrality gap for these instances follow readily from the integrality gap for the example in Figure~\ref{fig:12tsp}.
}
We then check whether this is the worst integrality gap for each vertex of subtour LP. A list of non-isomorphic vertices of the subtour LP is available for $n=6$ to $12$ at Sylvia Boyd's website \url{http://www.site.uottawa.ca/~sylvia/subtourvertices}. In order to check whether the lower bound on the integrality gap is tight, we solve the following integer programming problem for each vertex $x$ of the polytope for $n=11$ and $n=12$, where now the costs $c(e)$ are the decision variables, and $x$ is fixed:
\iftoggle{abs}{
\[\max\{ z - \alpha_n \sum_{e \in E} c(e) x(e): \sum_{e \in T} c(e) \geq z\  \forall \mbox{ tours } T;  c(e) \in  \{1,2\} \  \forall e \in E. \}\]}
{
\lps
& & & \mbox{\sf Max} &z - \alpha_n \sum_{e \in E} c(e) x(e) \\
& \mbox{subject to:} \\
& & & & \sum_{e \in T} c(e) \geq z, & \forall \mbox{ tours } T,\\
& & & & c(e) \in  \{1,2\}, & \forall e \in E.  \elps
}
Note that $\alpha_n$ is the lower bound on the integrality gap for instances of $n$ nodes. If the objective is nonpositive for all of the vertices of the subtour LP, then we know that $\alpha_n$ is the integrality gap for a particular value of $n$.

Since the number of non-isomorphic tours of $n$ nodes is $(n-1)!/2$, the number of constraints is too large for CPLEX for $n =11$  or $12$. We overcome this difficulty by first solving the problem with only tours that have at least $n-1$ edges in the support graph of the vertex $x$, and repeatedly adding additional violated tours. 
We find that the worst case integrality gap for $n=11$ is $\frac{12}{11}$ and for $n=12$ is $\frac{13}{12}$.



\iftoggle{abs}{}{

We can now observe that our overall computation leads to a bound of $\frac{10}9$ on the integrality gap for instances of the 1,2-TSP with $n \leq 12$.  Suppose the worst-case integrality gap for these instances is attained for an instance with $k$ vertices.  If $k \leq 8$, then we know that there is a biconnected graph on at most 10 vertices with no better integrality gap, and we have determined the worst-case ratio for all biconnected graphs on at most 10 vertices.  If $k=9$ or $k=10$, then we know there is a biconnected graph on at most 12 vertices with no better integrality gap, and we have determined the worst-case ratio for all biconnected graphs with at most 10 vertices and all instances with 11 or 12 vertices.  If $k=11$ or $k=12$, then we have determined the worst-case ratio for all instances with 11 or 12 vertices.  Thus for any instance with $n\leq 12$, we have determined that the integrality gap is at most $\frac{10}9$.
}

\section{Conjectures and conclusions}
\label{sec:conc}

As stated in the introduction, we conjecture the following.
\begin{conj} \label{conj:12tsp-gap}
The integrality gap of the subtour LP for the 1,2-TSP is $\tfrac{10}9$.
\end{conj}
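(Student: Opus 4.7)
The plan is to separate the task into two reductions, each matching one of the frameworks already in the paper. First, one would try to show that for the $1,2$-TSP it suffices to consider instances in which the optimal subtour LP solution is an F2M, i.e., to prove the analog of Conjecture~\ref{conj:tsp} in this restricted setting. Given an instance for which the optimal subtour LP solution $x^*$ is not an extreme point of the F2M polytope, one would try a perturbation/convex-combination argument: write $x^*$ as a convex combination of F2M extreme points of no greater cost (using the half-integral structure and triangle inequality), and argue that at least one of these yields an integrality gap at least as bad. Because all edge costs lie in $\{1,2\}$, the perturbation space is essentially combinatorial and one may be able to avoid the obstacles that keep Conjecture~\ref{conj:tsp} open in the general metric case.

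Second, assuming the above reduction, one must strengthen Theorem~\ref{thm:tour109} so that the $\tfrac{10}{9}$ bound holds for \emph{every} F2M solution, not only 2-connected canonical ones. For non-canonical F2M solutions, the remark following Theorem~\ref{thm:tour76} already shows how to produce another optimal F2M of no greater cost in which the offending 1-path between two odd cycles is eliminated; one would iterate this transformation to reach a canonical optimum, being careful that the process terminates (e.g., by a potential function on the number of cycle edges of cost 2). For the non-2-connected case, the obstacle is precisely the eared-path term $z$ in the proof of Theorem~\ref{thm:tour76}. Here the idea is that a 1-path between two neighbors of an odd cycle creates a small cut, which means the cut essentially decomposes the instance into two smaller pieces that are stitched by a single 1-path; one would argue inductively on the number of such 2-edge cuts, processing each bridging 1-path separately and paying for the extra matching-edge out of the slack created on the other side of the cut.

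The main obstacle I expect is the first reduction. Conjecture~\ref{conj:tsp} in full generality is a well-known open problem, and although the $\{1,2\}$ restriction does reduce the combinatorics substantially, it is not clear that a convex-combination argument on F2M extreme points preserves cost in the presence of the tight integrality constraints that make the 9-city example of Figure~\ref{fig:12tsp} extremal. An alternative, possibly safer, route is to bypass Conjecture~\ref{conj:tsp} entirely by sharpening the proof of the $\tfrac{19}{15}$ bound: Lemma~\ref{lem:pure-nonpure} already distributes the patching cost separately over pure and non-pure cycle nodes, and one could try to show a cost of at most $\tfrac{1}{9}$ per node in each category by a more refined version of the Papadimitriou--Yannakakis merging, together with a dual-fitting argument (in the spirit of the proof of Theorem~\ref{thm:intgap12}) that pays for this using the $n+r$ lower bound on $\OPT(\SUBT)$ more tightly when $r$ is large. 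The computational evidence for $n\le 12$ strongly suggests $\tfrac{10}{9}$ is the true value, which in turn suggests that the extremal instances are highly structured (copies of Figure~\ref{fig:12tsp}); identifying and exploiting this structure will likely be the crucial ingredient.
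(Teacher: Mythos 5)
The statement you are asked to prove is stated in the paper as an open conjecture (Conjecture~\ref{conj:12tsp-gap}); the paper contains no proof of it, only partial results (a bound of $\frac 76$ conditional on the optimal subtour solution being an F2M, $\frac{10}9$ under the stronger hypotheses of Theorem~\ref{thm:tour109}, an unconditional $\frac{19}{15}$, and computational verification for $n\le 12$). Your proposal, as you yourself acknowledge, is a research plan rather than a proof, and each of its two pillars has a genuine gap. The first reduction is exactly the 1,2-analog of Conjecture~\ref{conj:tsp}, which is open; the convex-combination argument you sketch does not close it, because an optimal subtour LP solution that is not an F2M extreme point can indeed be written as a convex combination of F2M extreme points of no greater F2M cost, but those extreme points need not satisfy the subtour elimination constraints (their fractional components may be disconnected or fail 2-connectivity), so they are not feasible LP solutions for the same instance, and a cost comparison on them says nothing about the ratio of optimal tour to LP value. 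Preserving the LP objective is not the issue; preserving feasibility and the \emph{ratio} is, and that is precisely the obstacle that keeps Conjecture~\ref{conj:tsp} open in general.

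The second pillar also overreaches. Even granting the F2M reduction, the paper's conditional bound is only $\frac 76$, and the authors explicitly note in Section~\ref{sec:conc} that improving this to $\frac{10}9$ under the same assumption is open. The $\frac{10}9$ bound of Theorem~\ref{thm:tour109} needs both canonicity and 2-connectivity: canonicity supplies the inequality $c(P)\ge k-2p$ and 2-connectivity kills the eared-path term $z$. The canonicalization remark after Theorem~\ref{thm:tour76} only removes 1-paths of cost 1 joining two distinct odd cycles whose four incident cycle edges all have cost 1; it does not handle arbitrary non-2-connected fractional components, and your proposed induction on 2-edge cuts is not carried out (it is unclear how the ``slack on the other side of the cut'' pays for the extra cost-2 edge without degrading the ratio on the smaller piece). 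Finally, your fallback route through Lemma~\ref{lem:pure-nonpure} cannot reach $\frac{10}9$ as stated: any analysis that passes through a minimum-cost 2-matching and charges $OPT(2M)\le\frac{10}9\,OPT(\SUBT)$ separately from a per-node patching charge already spends the entire $\frac{10}9$ slack on the first step in the worst case, so one would need a genuinely combined argument, which is not provided. In short, the proposal identifies reasonable directions but proves nothing beyond what the paper already establishes, and its key steps are themselves open problems.
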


Schalekamp, Williamson, and van Zuylen~\cite{SchalekampWvZ11} have conjectured that 
to determine the integrality gap for the subtour LP, we can restrict ourselves to considering instances, which have an optimal solution that is an extreme point of the F2M polytope. 

We have shown in Theorem~\ref{thm:tour76} that if an analogous conjecture is true for 1,2-TSP, then the integrality gap for 1,2-TSP is at most $\tfrac 76$; it would be nice to show that if the analogous conjecture is true for 1,2-TSP then the integrality gap is at most $\tfrac{10}9$.

Finally, we remark that the integrality gap of the linear program obtained by adding the constraints
\[\sum_{e\in\delta(S)\backslash F} x(e) + \sum_{e\in F} (1-x(e)) \ge 1 \quad \forall S \subset V,\ F\subseteq \delta(S),\ |F| \mbox{ odd},\]
to (\SUBT) is at most $\tfrac{11}9$ by (\ref{eq:PY}) and Lemma~\ref{lem:PYsubt}, since the 2M polytope is described by these additional constraints plus the degree constraints.
It is an interesting question whether the analysis of Berman and Karpinski~\cite{BermanK06} can also be expressed in terms of the optimal value of this stronger LP.

\subsection*{Acknowledgements}
We thank Sylvia Boyd for useful and encouraging discussions.
We thank two anonymous referees for helpful comments and suggestions.

\bibliographystyle{spmpsci}      
\bibliography{subtour}
\appendix
\section{Proof of Lemma~\ref{lem:pure-nonpure}} \label{app:purenonpure}
We now prove Lemma~\ref{lem:pure-nonpure} from Section~\ref{sec:better}.
\begin{lemma}
If $OPT(\SUBT) < n+1$, then the difference between the cost of the 2M used and the tour constructed by the Papadimitriou-Yannakakis algorithm can be upper bounded by $\alpha n_\text{pure} + \beta (n_\text{non-pure}-\ell)$, where
$n_\text{pure}$ is the number of nodes in pure cycles in the 2M, $n_\text{non-pure}$ is the number of nodes in the non-pure cycle, and $\ell$ is the number of edges of cost 2 in the non-pure cycle, 
for any values of $\alpha, \beta$ so that $9\alpha \ge 2$ and $3\alpha + 2\beta \ge 1$.
\end{lemma}

\begin{proof}
Recall from Section~\ref{sec:12gap} that the Papadimitriou-Yannakakis algorithm starts by finding a maximum cardinality bipartite matching in a graph which has a node for each pure cycle on one side, and a node for each node in the instance on the other side. There is an edge $(C,i)$ if $i\not\in C$, and there exists some node $j$ in $C$ such that $(i,j)$ is an edge of cost 1.

In Lemma~\ref{lem:PYsubt}, we show that $OPT(\SUBT)\geq n+r$, where $r$ is the number of pure cycles that are not matched in the maximum cardinality bipartite matching.
Hence, the assumption that $OPT(\SUBT)<n+1$ implies that all the pure cycles are matched.
In order to show that this implies the lemma, we will repeat some key parts of the algorithm and analysis of Papadimitriou and Yannakakis.

Consider the directed graph $F=({\cal C},A)$ which has a node for every cycle in the 2M, and an arc $(C,C')$ if the maximum cardinality bipartite matching contains an edge from cycle $C$ to a node $i$ in cycle $C'$. 
Each node in $F$ that corresponds to a pure cycle has outdegree 1, and the non-pure cycle (if it exists) has outdegree 0.
Papadimitriou and Yannakakis show how to find a spanning subgraph of $F'$ of $F$ such that each nontrivial component is an in-tree of depth one or a path of length two. The only possible trivial component is the node that corresponds to the non-pure cycle.
Since the non-pure cycle has outdegree 0, it can only occur in a nontrivial component as the root of an in-tree, or as the endpoint of a path of length two. It turns out that the latter does not happen in the construction described by Papadimitriou and Yannakakis, but even if it did, we could just remove the last edge in the length-two path to obtain one in-tree of depth one and one trivial component containing the non-pure cycle. Hence, we may assume the non-pure cycle only occurs in a nontrivial component as the root of an in-tree.

Papadimitriou and Yannakakis now merge the cycles in one component of $F'$ into a single cycle containing at least one edge of cost 2 as follows:
If the component is an in-tree of depth one, let $C$ be the cycle corresponding to the root, let $C_1, \ldots, C_m$ be the remaining cycles in the component, and let $v_i$ be the node in $C$ such that $(C_i,v_i)$ was in the bipartite matching. 
We consider the nodes in $C$ in clockwise order, starting from a node $v\neq v_i$ for $i=1, \ldots, m$ if such a node exists, and an arbitrary node $v$ otherwise. If we encounter two adjacent nodes $v_i,v_j$ in $\{v_1, \ldots, v_m\}$, then we merge the corresponding cycles $C_i$ and $C_j$ with $C$ according to (a) in Figure~\ref{fig:PY}. Otherwise, if the current node is $v_j$ but its clockwise neighbor is not or if its clockwise neighbor is the first node $v$, then we merge $C_j$ with $C$ as in (b) in Figure~\ref{fig:PY}. Finally, if the component is a path of length two, we merge the three cycles as in (c) in Figure~\ref{fig:PY}. Note that each cycle in the resulting graph contains at least one edge of cost 2, and hence we can find a tour of the same cost by removing the edges of cost 2, and arbitrarily connecting the resulting paths into a tour.

\newlength{\arrowlength}
\settowidth{\arrowlength}{$\rightarrow$}
\newlength{\tw}
\setlength{\tw}{.83\textwidth}
\begin{figure}
\begin{center}
\subfloat[ ]{
\begin{tabular}{cp{\arrowlength}c}
\begin{minipage}[c]{.207\tw}
\includegraphics[width=.207\tw]{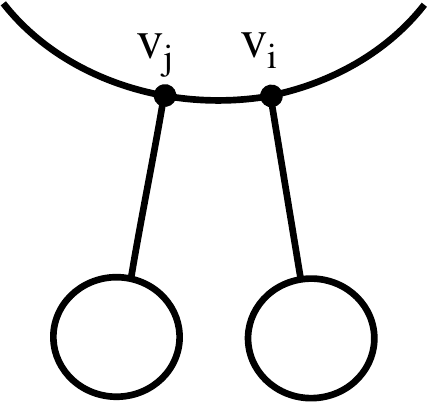}
\end{minipage}
&
\begin{minipage}[c]{\arrowlength} $\rightarrow$ \end{minipage}
&
\begin{minipage}[c]{.207\tw}
\includegraphics[width=.207\tw]{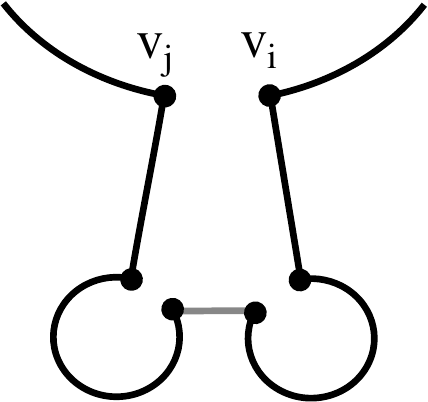}
\end{minipage}
\end{tabular}
}
\qquad
\subfloat[ ]{
\begin{tabular}{cp{\arrowlength}c}
\begin{minipage}[c]{.19\tw}
\includegraphics[width=.19\tw]{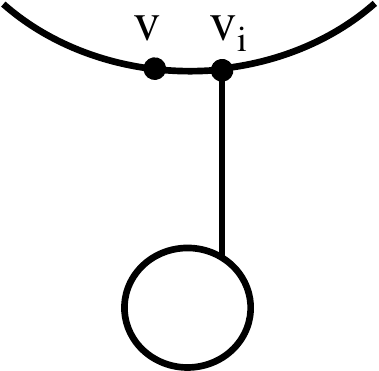}
\end{minipage}
&
\begin{minipage}[c]{\arrowlength} $\rightarrow$ \end{minipage}
&
\begin{minipage}[c]{.19\tw}
\includegraphics[width=.19\tw]{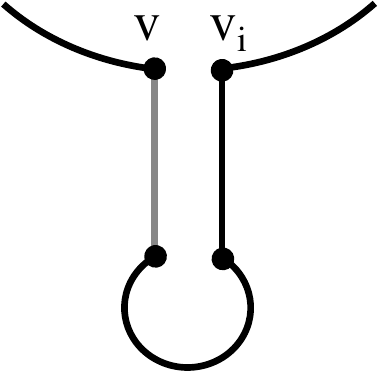}
\end{minipage}
\end{tabular}
}
\\
\subfloat[ ]{
\begin{tabular}{cp{\arrowlength}c}
\begin{minipage}[c]{.26\tw}
\includegraphics[width=.26\tw]{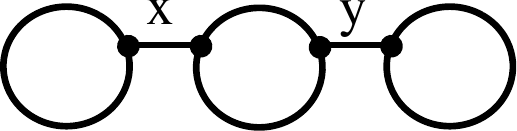}
\end{minipage}
&
\begin{minipage}[c]{\arrowlength} $\rightarrow$ \end{minipage}
&
\begin{minipage}[c]{.26\tw}
\includegraphics[width=.26\tw]{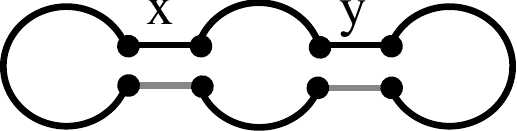}
\end{minipage}
\end{tabular}
}
\end{center}
\caption{The three cases for merging cycles in \cite{PapadimitriouY93} (Figure 2 in \cite{PapadimitriouY93}). The black edges indicate edges of cost 1, and the grey edges (potentially) have cost 2.}\label{fig:PY}
\end{figure}

We now show that the number of edges of cost 2 that are added by merging cycles according to Figure~\ref{fig:PY} can be upper bounded by $\alpha n_\text{pure}+ \beta (n_\text{non-pure}-\ell)$, provided that $\alpha$ and $\beta$ are so that $9\alpha \ge 2$ and $3\alpha + 2\beta \ge 1$.

We say a node is involved in a merging if it is either a node in one of the cycles that are fully drawn in Figure~\ref{fig:PY}, or if it is node $v$ or $v_i$ in subfigure (b). Note that each node is involved in at most one merging. 
Recall that the non-pure cycle can only occur as the root of a 1-tree of depth one or as a trivial component in $F'$, and hence, only the partially drawn cycle in (a) and (b) is (potentially) a non-pure cycle.

We now examine each of the cases (a), (b) and (c) in Figure~\ref{fig:PY} in turn.
In Figure~\ref{fig:PY} (a) one edge of cost 2 is added and we can charge this edge to the (at least) 6 nodes from pure cycles involved in this merging, as long as $6\alpha \geq 1$. This is indeed the case, because we have the stronger requirement that $9\alpha \geq 2$. In (b), again, one edge of cost 2 is added, and we can charge the edge to the (at least) three nodes of the pure cycle involved in the merging and the 2 nodes of the (potentially) non-pure cycle involved in the merging, as long as $3\alpha + 2\beta \ge 1$ (in case the 2 nodes were part of the non-pure cycle), and $5\alpha \geq 1$ (in case the 2 nodes were part of a pure cycle). Finally, in Figure~\ref{fig:PY} (c), two edges of cost 2 are added; we can charge the two edges to the (at least) nine nodes from pure cycles involved in the merging as long as $9 \alpha \geq 2$.

Hence, we have shown that difference in cost between the tour and the 2M can be charged to the nodes, in such a way that each node is charged at most once, and a node in a pure cycle is charged at most $\alpha$ and a node in a non-pure cycle is charged at most $\beta$.

Finally, we remark that a node in a non-pure cycle is charged only in case (b). Now, if $(v_i,v)$ in Figure~\ref{fig:PY}(b) is an edge of cost 2, then there is no need to charge any nodes, since the cost after merging is the same as before the merge. Hence, if we direct all edges of the non-pure cycle in clockwise direction, then the head of the edges of cost 2 is never charged. The total chage to the nodes in the non-pure cycle is therefore at most $\beta(n_\text{non-pure}-\ell)$.
\mqed\end{proof}

\section{Proof of Corollary~\ref{cor:12biconn}} \label{app:twolemmas}
We now show that the worst-case integrality gap for the subtour LP for the 1,2-TSP can be found on graphs of cost 1 edges that are biconnected, as stated in Corollary~\ref{cor:12biconn} in Section~\ref{sec:comp}.  Let $OPT(G)$ and $\SUBT(G)$ be the cost of the optimal tour and the value of the subtour LP (respectively) when $G$ is the graph of cost 1 edges.
We start by proving that the worst case is obtained on a connected graph.
\begin{lemma} \label{lem:12connect}
Let $G=(V,E)$ be the graph of cost 1 edges in a 1,2-TSP instance.  Then if $G=(V,E)$ is not connected, there exists a connected graph $G'=(V',E')$ with $|V'|=|V|+1$ such that $OPT(G)/\SUBT(G) \leq OPT(G')/\SUBT(G')$.
\end{lemma}

\begin{proof}
Suppose $G$ has more than one connected component. We create $G'=(V',E')$ by adding a new vertex $i^*$ to the graph, and adding edges from all $j \in V$ to $i^*$ so that $V' = V \cup \set{i^*}$ and $E' = E \cup \set{(i^*,j):  j \in V}$.  Given a tour of $G'$, we can easily produce a tour of $G$ of no greater cost by shortcutting $i^*$, so that $OPT(G) \leq OPT(G')$.  Let $x$ be an optimal solution to the subtour LP for the graph $G$.  We now define a solution $x'$ for $G'$, where $x'_{ij} = x_{ij}$ if $i$ and $j$ are in the same connected component of $G$, while if $i$ and $j$ are in different connected components of $G$, then we set $x'_{ij} = 0$, $x'_{i^*i}=x_{ij}$, and $x'_{i^*j} = x_{ij}$.  It is easy to see that the cost of $x'$ is the same as that of $x$.  We now argue that there is some solution $x''$ feasible for the subtour LP on $G'$ such that its cost is no greater, so that $\SUBT(G') \leq \SUBT(G)$.  It is clear that the bounds constraints (\ref{boundscons}) are satisfied for $x'$ and the degree constraints (\ref{degreecons}) are satisfied for $x'$ for all $i \in V$; however, the degree constraint for $i^*$ may not be satisfied.  Since for any component $C \subseteq V$ of $G$, $x(\delta(C)) \geq 2$, it is clear that $x'(\delta(i^*)) \geq 2$, but it may be the case that $x'(\delta(i^*)) > 2$.  For the subtour constraints (\ref{subtourcons}), consider any $S \subset V'$, $S \neq \emptyset$, such that $i^* \notin S$.  Then $x'(\delta(S)) \geq x(\delta(S)) \geq 2$, and for any $S \subseteq V'$ with $i^* \in S$, $S \neq \{i^*\}$, $x'(\delta(S)) = x'(\delta(V'-S)) \geq 2$ by the previous argument.  Finally, Goemans and Bertsimas~\cite{GoemansB90} have shown (see also Williamson~\cite{Williamson90}) that if edge costs obey the triangle inequality, and there is some solution $x'$ to the subtour LP in which degree constraints are exceeded but all other constraints are met, then there is another feasible solution $x''$ of no greater cost in which all constraints are satisfied.  Hence we have that $\SUBT(G') \leq \SUBT(G)$. Thus we have that $OPT(G)/\SUBT(G) \leq OPT(G')/\SUBT(G')$.
\mqed\end{proof}

\begin{lemma}\label{lem:12biconn}
Let $G=(V,E)$ be the graph of cost 1 edges in a 1,2-TSP instance.  Then if $G=(V,E)$ is connected but not biconnected, there exists a biconnected $G'=(V',E')$ with $|V'|=|V|+1$ such that $OPT(G)/\SUBT(G) \leq OPT(G')/\SUBT(G')$.
\end{lemma}

\begin{proof}
By hypothesis we assume that the graph $G=(V,E)$ is connected.  Let $i_1, \ldots, i_k$ be all the cut vertices of $G$, and let $C_1,\ldots, C_\ell$ be all the connected components formed when these vertices are removed, so that $C_1,\ldots,C_\ell, \set{i_1},\ldots, \set{i_k}$ form a partition of $V$.  We create a new graph $G'=(V',E')$ by adding a new vertex $i^*$, and adding edges from $i^*$ to each vertex in $C_1 \cup \cdots \cup C_\ell$, so that $V' = V \cup \set{i^*}$ and $E' = E \cup \set{(i^*,j): j \in C_p \mbox{ for some } p}$. We note that $G'$ is biconnected. As before, we have $OPT(G) \leq OPT(G')$ since given a tour of $G'$ we can shortcut $i^*$ to get a tour of $G$.  Let $x$ be an optimal subtour LP solution for graph $G$.  We now argue, as we did in the proof of Lemma~\ref{lem:12connect}, that we can create an $x'$ that costs no more than $x$ such that all the subtour and bounds constraints are obeyed, and all degree constraints are either met or exceeded; this will imply that $\SUBT(G') \leq \SUBT(G)$, and complete the proof.  Suppose without loss of generality that removing cut vertex $i_1$ creates components $C_1$ and $C = C_2 \cup \cdots \cup C_\ell \cup \set{i_2} \cup \cdots \cup \set{i_k}$, so that $C_1$, $\set{i_1}$, and $C$ partition $V$.  We set $x'_{ij} = 0$ and $x'_{i^*i}=x'_{i^*j}=x_{ij}$ if $i \in C_1$ and $j \in C$; $x'_{ij} = x_{ij}$ otherwise.  If $i \in C_1$ and $j \in C$, then $(i,j) \notin E$ since $i_1$ is a cut vertex, so the cost of $x'$ is no more than that of $x$.  The arguments that all constraints are satisfied except for the degree constraint on $i^*$ follow as in the proof of Lemma~\ref{lem:12connect}.  We now must argue that $x'(\delta(i^*)) \geq 2$.  To do this, we show that $\sum_{i \in C_1, j \in C} x_{ij} \geq 1$.  Since $x(\delta(i_1)) = 2$, it must be the case that either $\sum_{j \in C} x_{i_1j} \leq 1$ or $\sum_{j \in C_1} x_{i_1j} \leq 1$; without loss of generality we assume the former is true.  Then since $x(\delta(C_1 \cup \set{i_1})) \geq 2$, and $x(\delta(C_1 \cup \set{i_1})) = \sum_{j \in C} x_{i_1j} + \sum_{i \in C_1, j \in C} x_{ij}$, it follows that $\sum_{i \in C_1, j \in C} x_{ij} \geq 1$, and the proof is complete.
\mqed\end{proof}

\end{document}